\theoremstyle{plain}
\newtheorem{theorem}{Theorem}
\newtheorem{proposition}[theorem]{Proposition}
\theoremstyle{remark}
\newtheorem{remark}{Remark}
\newcommand{\vect}[1]{\boldsymbol{#1}}  
\newcommand{\sgn}{\operatorname{sgn}}   
\newcommand{\sys}{cloud oracle\xspace}
\newcommand{\placement}{placement decision\xspace}
\newcommand{\placements}{placement decisions\xspace}
\newcommand{\placementS}{decision\xspace}
\newcommand{\placementsS}{decisions\xspace}
\newcommand{\insight}{\textit{\textbf{Implications}:\xspace}}
\newcommand{\shorten}[1]{\xspace}
\DeclareMathOperator*{\argmin}{arg\,min}
\definecolor{code_indent}{HTML}{CCCCCC}
  \providecommand\BibTeX{{%
    \normalfont B\kern-0.5em{\scshape i\kern-0.25em b}\kern-0.8em\TeX}}}
\begin{document}

\title[Optimizing the cloud? Don’t train models. Build oracles!]{Optimizing the cloud? Don’t train models. Build oracles!}
\subtitle{
\begin{ocg}[printocg=always,exportocg=never]{Print Only}{fig:printonly}{false}
\small{You have unlocked the cloud oracle icon by printing this paper. We hope you like the icon and this paper. Let us know when our paths cross.}
\end{ocg}
}
\begin{textblock*}{1pt}(20pt,50pt) 
\begin{ocg}[printocg=always,exportocg=never]{Print Only}{fig:printonly}{false}
  \includegraphics[width=2cm]{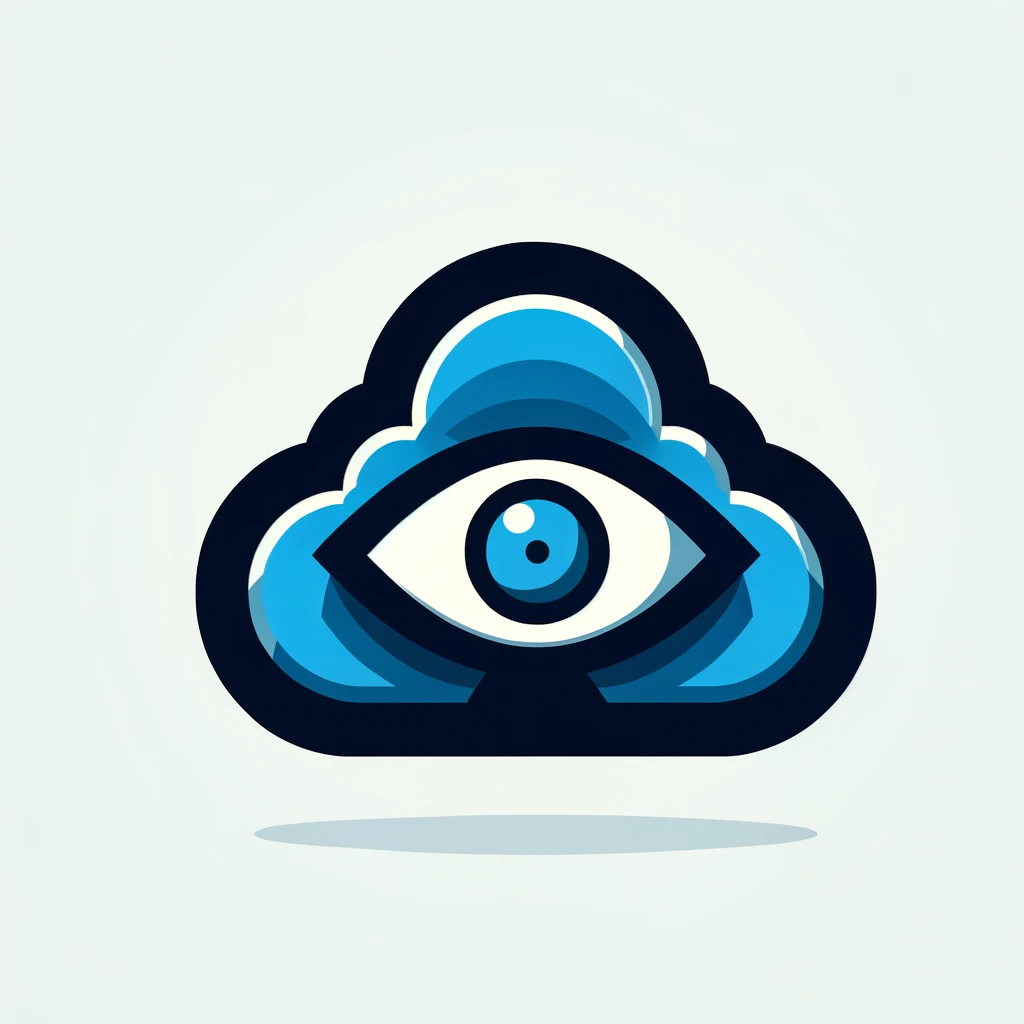}
\end{ocg}
\end{textblock*}

\author{Tiemo Bang, Conor Power, Siavash Ameli, Natacha Crooks, Joseph M. Hellerstein}
\affiliation{%
  \institution{University of California, Berkeley}
  \country{}
}
\email{tbang@berkeley.edu}

\renewcommand{\shortauthors}{Bang, et al.}

\begin{abstract}
We propose \textit{cloud oracles} as an alternative to machine learning for online optimization of cloud configurations. Our cloud oracle approach guarantees complete accuracy and explainability of decisions for problems that can be formulated as parametric convex optimizations. We give experimental evidence of this technique's efficacy and share a vision of research directions for expanding its applicability.
\end{abstract}

\maketitle

\section{Introduction}
\label{sec:intro}

\begin{figure}
    \centering
    \includegraphics[height=4cm]{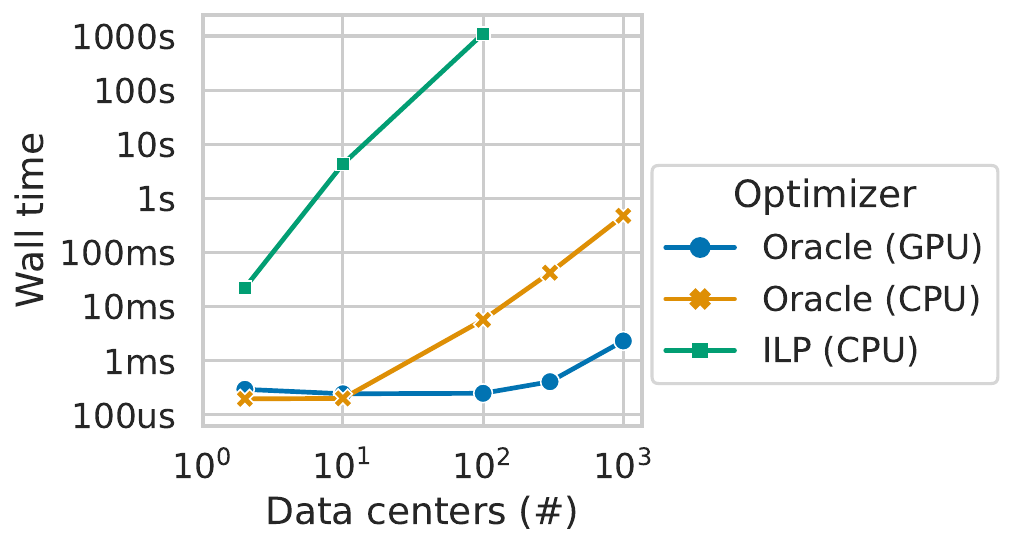}
    \caption{Optimization of object storage location for minimal access latency among $n$ data centers---via the traditional exact approach (ILP) and our exact \sys{}.}
    \label{fig:headline}
\end{figure}

The cloud offers the promise of an infinitely flexible, elastic, and scalable black box. In practice, to keep costs low, developers must intelligently choose how to deploy their applications between hundreds of storage tiers, hardware types, and regions. This is exacerbated by the move to multi-cloud pipelines and the proliferation of cloud providers~\cite{jain_skyplane_2023, yang_skypilot_2023, chasins_sky_2022}. Finding the optimal configuration for a system is far from straightforward. This challenge is further compounded by the fact that one must find this configuration not just once, but multiple times as workload changes, which happens frequently at scale.

Existing solutions cluster around two extremes. 
On the one hand, Integer Linear Programming (ILP) solvers offer provably correct solutions but at prohibitively high computational cost.
For instance, our benchmarks highlighted in Figure~\ref{fig:headline} show that optimizing the storage location of a single object takes >16 minutes at the current cloud-scale (>100 data centers)~\cite{cloudflare_datacenters,aws_obj,gcp_obj,azure_obj}.
On the other hand are approximation solutions, currently dominated by machine learning techniques. These trade some accuracy for fast execution but suffer from operational challenges of training pipelines and a lack of explainability that hinders deployment in production settings~\cite{autonomous_azure, kea}.
This accuracy trade-off is also an expensive one at cloud scales where companies are paying hundreds of millions in cloud bills annually. 
Recent work on ML-based autotuning for cloud databases~\cite{onlinetune} showed performance variations in the 10s of percents (millions of dollars of waste, at the scale discussed above), as well as outright system crashes due to unsafe configuration recommendations.

In this paper, we propose a third avenue: cloud oracles. We observe that many decisions in the cloud are (yet another) scenario that can be viewed as a query optimization problem! This perspective on cloud decisions leads us to a little-used hammer in the query optimization toolbox: parametric query optimization (PQO).

Like machine learning, PQO splits optimization into an offline precomputation phase and an online serving phase. At compile time, PQO, based on a known cost model, materializes several execution plans, each one of which is optimal for a subset of all possible input parameters.
At runtime, when the input parameters become known, PQO can then select the final, correct plan with near-zero overhead, as all relevant optimal plans have already been indexed~\cite{ioannidis_parametric_1997}.
PQO's success hinges on \textbf{three conditions}:
1) a relatively fixed cost model
for long-lasting offline decisions;
2) a ``well-behaved'' cost model for tractable offline computation of all optimal decisions;
3) a tractable decision space---either enumerable in reasonable time, or collapsible/prunable into a tractable number of equivalence classes.
\shorten{\footnote{To use the classic example of join enumeration, there are $\frac{(2(n-1))!} {n!(n-1)!)}$ binary join trees of $n$ leaves (the $n-1$’st Catalan number), but thanks to the principle of optimality only $(n-1)2^{n-2}$ feasible joins to explore in dynamic programming [OnoLohmanVLDB90]. For $n=20$ this is $1.76726319\times 10^9$ vs. $4.980736\times 10^6$.}}

Cloud optimization broadly satisfies these conditions.
Pricing and SLOs are reliable and fairly slow-changing by necessity, to maintain stable business relations and service operation. Also, the pay-per-use cloud billing model translates to simple linear cost functions: $0.3$ per read and $0.5$ per write for instance.
Finally, the unbounded elasticity of cloud infrastructure prevents ``bin-packing-style'' pitfalls where due to limited capacity one partial decision changes the options available for the next partial decision---impeding pruning.
As a result, the analogy from PQO to cloud oracles is quite direct:
query plans map to system configurations, e.g., choices of VM or storage resources across regions or tiers; query execution costs map to price- and latency-based cost models; workload parameters like query selectivity map to API request metrics like \texttt{PUT} and \texttt{GET} rate.

Given the fixed cost model and the space of possible parameters, we propose to precompute a data structure that can be leveraged online
to frequently compute the optimal cloud configuration for the current workload.
To that end, it is natural in today's environment
to consider an ML approach, and train a probabilistic or heuristic model for this task. However, 
given well-behaved cost functions and the financial penalties for misprediction in this setting, we propose instead to adopt an exact approach from computational geometry (\S\ref{sec:motivation:us}):
a convex polytope that represents all feasible and non-dominated configurations.
This polytope is both compact (on the order of GBs) and deterministically optimal.

Calculating the optimal configuration for a given workload corresponds to finding the lowest point on the perimeter of the polytope matching each parameter.
Conveniently, this can be computed quickly and with exact accuracy via a simple geometric calculation based on ray-shooting, described in \S\ref{sec:motivation:us}.
As the workload shifts, we can use the precomputed polytope and the fast online ray-shooting calculation to recompute the solution for the new input.
Figure~\ref{fig:headline} highlights that the \sys{} returns these optimization results within milliseconds rather than minutes---orders of magnitude speedup without loss of accuracy.

We refer to our approach as the \textit{cloud oracle} approach because at runtime we are able to quickly extract exact answers from the data structure (oracle) we built in the offline step.
Source code is available on GitHub: \url{https://github.com/hydro-project/cloud_oracle}.
\section{From Parametric Optimization to Cloud Oracles}
\label{sec:motivation}
In the cloud, the design space is massive, we can reconfigure at will, and the stakes are high. Under these circumstances, there are two properties we want from a cloud optimization technique: \textit{accuracy} and \textit{velocity}. Accuracy refers to the distance the computed configuration is from the true optimal configuration. Velocity refers to the frequency with which we can recompute the optimal configuration online. 
In this section, we describe the two current approaches to optimization in the cloud: ILPs (high accuracy, low velocity) and ML (lower accuracy, high velocity). We argue for a third approach: explicit cloud oracles based on parametric optimization. This approach offers the best of both worlds for any problem amenable to a parametric representation.

\subsection{The Rising Star: Machine Learning}

The use of ML for decision-making in data systems has rapidly gone from a vision~\cite{seattle_report, sageDB} to a production reality~\cite{cosmos_big_data, redshift_reinvented, kea, autonomous_azure}. Fundamentally, machine learning does heavyweight offline computation (training) in order to achieve high-velocity online decision-making (inference). Its use cases in systems primarily replace the greedy or heuristic algorithms that were developed in prior decades.
In many scenarios, we see big accuracy improvements from machine learning over previous heuristic-based approaches to approximation~\cite{ready_for_cardinality}. These solutions still fall short of perfect accuracy though and their improved accuracy over classical heuristics comes at a heavy price: computationally heavy training and lack of explainability.

First, real-world training of machine learning is highly burdensome. The process of gathering production telemetry at scale, testing and verifying these models, and preventing performance regressions has fostered the growth of an entirely new field of research and industry. Production \textit{ML ops} is highly complex~\cite{shreya_rolando, cloudy_with_dbms} and with increasingly stringent privacy regulations, it is only getting more challenging~\cite{private_ml}.

Second, the lack of predictability and interpretability of decisions is a major barrier to production adoption of machine learning techniques. At Microsoft, researchers describe the production team's preference for heuristics over machine learning and for simple linear or tree models over the more complex models used in research settings~\cite{autonomous_azure, kea}. These preferences are due to the uninterpretability of models like deep learning models, which are often the models that outperform traditional solutions in research studies~\cite{ready_for_cardinality}.

For problems that can be expressed analytically, parametric query optimization offers a workaround to the headaches of ML ops and gives exact and interpretable answers for all decisions. To understand the parametric approach, we must first understand the linear programming techniques that it is built on.

\subsection{The Classicist: (Integer) Linear Programming}

Linear programming (LP) is a mathematical optimization technique that has been in use since the mid-20\textsuperscript{th} century. LP finds the input that minimizes some linear objective function, where the valid inputs are constrained by a set of linear equations. Many decision problems can be formulated in this way including data partitioning~\cite{matei_partitioning}, view materialization~\cite{cloudviews}, checkpoint selection~\cite{phoebe}, and resource allocation~\cite{pop}. While continuous linear programming problems can be solved with a runtime complexity of approximately $O(n^3)$~\cite{bewley_complexity_1987}, the integer versions of the problem are exponentially hard~\cite{karp1972reducibility}. In Integer Linear Programming (ILP), only inputs with integer values are valid. Similarly, Mixed Integer Linear Programming (MILP) requires that some but not all of the inputs be integers. Both ILP and MILP problems take exponential time to solve---they are NP-hard~\cite{karp1972reducibility}.

Intuitively, ILP is harder than LP as the search space in ILP consists only of discrete points whereas LP considers smooth surfaces that can be walked, and whose intersections can be computed cheaply. In particular, the valid solutions to LP problems form a convex polytope. This polytope is the geometric shape enclosed  by all the hyper planes (linear constraints) of the problem. Finding the intersection points on the perimeter of this polytope is sufficient for computing the optimal solution. Throughout this paper we will see that having this polytope representation of optimal solutions is very powerful, as we can use it not just to compute the answer on our current inputs efficiently, but to do what-if analysis and drift analysis efficiently as well (\S\ref{sec:use-cases}).

\subsection{The New Kid: Explicit Cloud Oracles}
\label{sec:motivation:us}

In this work, we observe that parametric query optimization
~\cite{trummer_multi-objective_2017,doshi_kepler_2023,bruno_configuration-parametric_2008,vaidya_leveraging_2021,ioannidis_parametric_1997} and cloud optimization are extremely similar. As such, many of the ideas of PQO can be repurposed for efficient optimization in the cloud context. PQO not only has the potential to achieve both the accuracy of ILPs and the velocity of ML, but it also enables, as we describe next, highly expressive optimization tasks.

``Parametric query optimization attempts to identify at compile time several execution plans, each one of which is optimal for a subset of all possible values of the run-time parameters. At runtime, when the actual parameter values become known, it is then possible to choose the final optimal plan with essentially no overhead.''~\cite{ioannidis_parametric_1997}
We propose transferring this idea to cloud optimization in the form of \textit{\sys{}s}---a precomputed data structure of optimization decisions, akin to a multidimensional index of the parameter space, which enables lightning-fast online optimization queries.

\emph{Offline Optimization of a Cloud Oracle.}
As a first step in alleviating optimization overhead, we advocate for offline precomputation of \sys{}s based on PQO techniques.
This is loosely analogous to the computation involved in training ML models, without any of the ML burdens of training data, model architecture selection, or hyperparameter tuning.
PQO essentially comprises 1) \textit{plan enumeration} in a fairly static problem setting and 2) a \textit{linear cost model} allowing for efficient comparison of plans.
Decisions in cloud optimization problems are equally enumerable and also have linear cost models.
We can thus build a \sys{} as the exhaustive collection of all optimal decisions over all possible parameters---enumerating and judging decisions similar to parametric query optimization.

\emph{Online Optimization by Geometry.}
Once the \sys{} has been precomputed offline, the next step is querying the oracle in a way that is both fast and accurate.
We can frame optimization tasks on a \sys{} as computational geometry queries on a convex polytope.
Rather than indexing the decisions, we point out that the exhaustive set of their linear cost functions forms a single convex polytope.
We find that we can efficiently query the convex polytope even when the \sys{} comprises millions of \placementsS with high-dimensional cost functions.
This is in part thanks to the ability to reuse GPU hardware and linear algebra packages commoditized by ML. These can be used to perform highly optimized computational geometry queries on this convex polytope.
Moreover, having access to the exhaustive set of all optimal decisions allows us to venture out to further optimization tasks beyond plain minimization tasks.

More formally, we represent the convex polytope as a dense matrix of hyperplanes derived from the cost functions of the decisions contained in the \sys{}.
That is, the linear cost function of a \placementS has the type $f: \mathbb{R}^n \rightarrow \mathbb{R}$, which can also be viewed as a hyperplane in $\mathbb{R}^{n+1}$ space---the $parameter \times cost$ space of the parameters and the resulting costs.
The space enclosed by all these hyperplanes is the convex polytope and the surface of this convex polytope represents the lowest costs for all parameters.

In \S\ref{sec:vision} we discuss the set of problems currently amenable to such \sys{} constructions and the open research avenues to expanding that set of problems. But first, let us jump into a concrete example.
\vspace{-2em}
\section{Case Study: Fault-tolerant Object Placement}
\label{sec:use-cases}

To make things concrete,
we explore the situation of building a \sys{} for fault-tolerant object storage
and walk through three online decision scenarios:
\textbf{which} configuration is optimal now,
\textbf{what if} we were to make a large change to the workload, and
\textbf{when} will we need to reconfigure in the future.

For our case study, consider an online retailer who uses object storage to manage its warehouse inventory---storing for each inventory item in each warehouse one object in Cloudflare's R2 object store~\cite{cloudflare_obj}.
The retailer wants to keep operating even in the face of disaster, e.g., power outage, but seeks to minimize access latency for their globally distributed clients.
Specifically, the retailer wants to store two up-to-date copies of each object in two of Cloudflare's 300 data centers~\cite{cloudflare_datacenters}
that are at least 200km apart.
Clients are permitted to read any one of the copies but must update both, so that the read latency is the minimum of the network latency between a client and the two data centers while the write latency is the maximum.

We define the fault-tolerant object placement problem for an object $o$ as follows.
Let $c \in C$ be the client data centers from which object $o$ is accessed.
Let $d \in D$ be the data centers that may host object $o$.
Let $\vec{l}$ be the matrix with the network latency between clients and data centers. Let $\vec{w}$ and $\vec{r}$ be the workload vectors containing the write frequencies and read frequencies of all clients.
If object $o$ is hosted in data centers $d$ and $d'$, the overall access latency is calculated as in Equation~\ref{eq:cost_function}.
The objective is to minimize overall access latency while choosing two data centers with $\geq$200km distance:

\begin{align}
    \forall d,d' \in D: f_{d,d'}:= \mathbb{R}^{|C|} \times \mathbb{R}^{|C|} \rightarrow \mathbb{R} \\
    f_{d,d'}(\vec{w}, \vec{r}) = \sum_{c \in C} max(\vec{l}_{c,d},\vec{l}_{c,d'})\vec{w}_c + min(\vec{l}_{c,d},\vec{l}_{c,d'})\vec{r}_c \label{eq:cost_function} \\
    \argmin_{d, d' \in D}\quad f_{d,d'}(\vec{w}, \vec{r}) \label{eq:argmin}\\
    s.t.\quad dist(d, d') \geq 200 \label{eq:constraint}
\end{align}

\begin{algorithm}[t]
\begin{align}
    &min \sum_{c \in C} z_c + \sum_{d \in D} l_{c,d} r_c y_{c,d}\text{ s.t.} &\text{\small{\#minimize total latency}}\\
    & \sum_{d \in D}x_d = 2, \forall c \in C & \text{\small{\#write 2 DCs}}\\
    & \sum_{d \in D} y_{c,d} = 1, \forall c \in C &\text{\small{\#read 1 DC}}\\
    &x_d \geq y_{c,d}, \forall c \in C, \forall d \in D &\text{\small{\#read only if writing}}\\
    &x_d, y_{c,d} \in \{0,1\}, \forall c \in C, \forall d \in D &\text{\small{\# no/yes decisions}}\\
    &z_c \geq l_{c,d} r_c x_d, \forall d \in D, \forall c \in C & \text{\small{\#aux. write lat. $max(l_{c,d}r_cx_d)$}}\label{eq:write_aux:start}\\
    &z_c \in \mathbb{R}, \forall c \in C &\text{\small{\#aux. var for write latency}}\label{eq:write_aux:end}\\
    &200 v_{d,d'} \leq dist_{d,d'}, \forall d,d' \in D & \text{\small{\#linear dist. constraint}}\label{eq:dist_constraint:start} \\
    &2v_{d,d'} \leq x_d + x_{d'}, \forall d,d' \in D & \text{\small{\#aux. mapping of $v$ to $x$:}}\\
    &v_{d,d'} \geq x_d + x_{d'} - 1, \forall d,d' \in D & \text{\#\small{$v_{d,d'} \iff x_d \wedge x_{d'}$}}\\
    &v_{d,d'} \in \{0,1\}, \forall d,d' \in D &\text{\small{\#aux. for dist. constraint}}\label{eq:dist_constraint:end}
\end{align}
\caption{ILP formulation for the optimal decisions of the fault-tolerant object placement problem with distance constraint. The distance constraint is linearized via auxiliary variables and constraints, as its straightforward formulation would be quadratic.}
\label{lst:ILP}
\end{algorithm}

\paragraph{The Classic ILP Formulation:}
Traditionally, one would encode this problem into the ILP in Algorithm~\ref{lst:ILP}. If you're not familiar with ILP formulations it is okay to ignore this diagram. 

Overall, 0-1 variables $x_d$ encode the binary decision of whether to store an object in a specific data center $d$
and $y_{c,d}$ encodes the decision of which client reads which copy.
The cost of these decisions in terms of write/read latency for the given write/read frequencies ($\vec{w}$, $\vec{r}$) and network latencies $\vec{l}$ is encoded in the linear coefficients $l_{c,d}$.
However, the \emph{max} terms and distance constraints do not permit straightforward linear encode. Their linear formulation requires auxiliary variables and constraints---Eq.~\ref{eq:write_aux:start}--\ref{eq:write_aux:end} and Eq.~\ref{eq:dist_constraint:start}--\ref{eq:dist_constraint:end}, respectively.
As a result, the ILP grows large and complex even for relatively few data centers, and we will see its prohibitive overhead in Section \ref{sec:use-cases-minimization}.

\subsection{Offline Computation of the Cloud Oracle}

Fortunately, we can compute a \sys{} for the fault-tolerant object placement problem parameterized on the read and write frequencies (the workload which changes over time). The \sys{} computation is possible because our three conditions all hold:
\begin{itemize}[noitemsep,nosep,left= 0pt .. 8pt]
    \item 
\emph{Condition 1--Fixed cost model}:
Cloud vendors like Cloudflare offer reliable network latency~\cite{cloudflare_latency_1,cloudflare_latency_2,cloudflare_latency_3}.
In practice, we hence can consider the network latencies $\vec{l}$ as static coefficients rather than parameters in our cost function (Eq.~\ref{eq:cost_function}).

\item \emph{Condition 2--Tractable cost model}: Our offline computation can easily resolve the \emph{max} and \emph{min} terms to plain linear coefficients (explained in Algorithm \ref{lst:construction}). This makes the cost model linear and tractable.

\item \emph{Condition 3--Tractable decision space}:
Object stores promise elastic capacity for reasonable workloads~\cite{cloudflare_obj,aws_obj,gcp_obj,azure_obj}.
We can assume that objects do not compete for capacity, so that the decisions space collapses to the placement of a single object.
Also,  we can assume that the optimal data centers always have capacity, so that we can prune out non-optimal decisions without regret.
\end{itemize}

Algorithm~\ref{lst:construction} implements the offline computation of a \sys{} for given client/storage data centers ($C$, $D$) and network latencies between those ($\vec{l}$). At a high level, we enumerate the data center pairs that satisfy the fault tolerance requirements and compute their read and write latencies.
Then we prune out strictly non-optimal pairs and add the remaining pairs to a matrix representation. More specifically:
\begin{itemize}[noitemsep,nosep,left= 0pt .. 8pt]
    \item
    \emph{Lines~\ref{line:construction:enum:start}--\ref{line:construction:enum:end}}:
    We enumerate the entire search space, apply the constraint (Eq.~\ref{eq:constraint}), and importantly compute the \emph{min} and \emph{max} terms of the objective function to get the concise network latency $\vec{l}_p$ of each valid placement $p$.
    \item
    \emph{Lines~\ref{line:construction:filter:start}--\ref{line:construction:filter:end}}:
    We now filter out irrelevant \placements. Without knowing the actual write and read frequencies ($\vec{w}$, $\vec{r}$),
    we can clearly tell that a \placement never yields lowest total access latency when its latency coefficients are dominated ($\vec{l}_p > \vec{l}_{p'}$).
    The result is a compact yet accurate set of \placements that are optimal for some write/read frequencies.
    \item \emph{Lines~\ref{line:construction:transformation:start}--\ref{line:construction:transformation:end}}:
    We finally construct the \sys{} as a matrix containing the latencies of all optimal \placements.
    We reformulate the linear coefficients $\vec{l}_p$ as hyperplanes in $parameter \times latency$ space, adding an additional $-1$ coefficient.
    This is the textbook reformulation: $\vec{l}_p \cdot \vec{a} \Leftrightarrow x =\vec{l}_p \cdot \vec{a} \Leftrightarrow 0 =\vec{l}_p \cdot \vec{a} - x$, where $\vec{a}=[\vec{w}_0,\dots,\vec{w}_{|C|},\vec{r}_0,\dots,\vec{r}_{|C|}]$.
    We now have a dense linear representation of all optimal \placements, allowing fast and accurate online optimization.
\end{itemize}

\begin{algorithm}[t]
\caption{Algorithm for computing a \sys{} for fault-tolerant object placement problem, given data centers $D$ that may store objects, client data centers $C$ that access objects, and the network latency $\vec{l}$ between those data centers.}
\label{lst:construction}
\begin{algorithmic}[1]
\Procedure{compute\_oracle}{$C$, $D$, $\vec{l}$}
  \State $placements \gets \{\;\}$ \Comment{Initialize decisions of oracle}
  \LComment{Enumerate valid data center pairs as potential placements}
  \ForAll{$d, d' \in D$}\label{line:construction:enum:start}
    \If{$dist(d,d') \geq 200$} \Comment{Check distance constraint}
    \State $p \gets (d,d')$ \Comment{Valid placement p}
    \LComment{Write/read latencies of p by element-wise max/min}
    \State $\vec{l}_p \gets [max(l_{0,d},l_{0,d'}),\dots, max(l_{|C|,d},l_{|C|,d'}),$
    \State $\quad \quad min(l_{0,d},l_{0,d'}),\dots, min(l_{|C|,d},l_{|C|,d'})]$ 
    \State $placements \gets placements \bigcup (p, \vec{l}_p)$
    \EndIf
  \EndFor\label{line:construction:enum:end}
  \LComment{Filter out placements with strictly higher latency}
  \ForAll{$(p,\vec{l}_p), (p',\vec{l}_{p'}) \in placements$}\label{line:construction:filter:start}
  \If{$ \vec{l}_{p'} < \vec{l}_p$}
  \State $placements \gets placements \setminus \{(p, \vec{l}_p)\}$
  \State \textbf{break}
  \EndIf
  \EndFor \label{line:construction:filter:end}
  \LComment{Convert into matrix of planes with coeff.\; $\vec{l}_p$ and extra -1}
  \State $planes \gets \begin{bmatrix}
\vec{l}_{0,p} & \cdots & \vec{l}_{2|C|,p} & -1 \\
\vdots
\end{bmatrix}, \forall (p, \vec{l}_p) \in placements$\label{line:construction:transformation:start}
\State
  \State \Return Oracle($placements$, $planes$)\label{line:construction:transformation:end}
\EndProcedure
\end{algorithmic}
\end{algorithm}

\begin{figure}
    \centering
    \includegraphics[height=4.5cm]{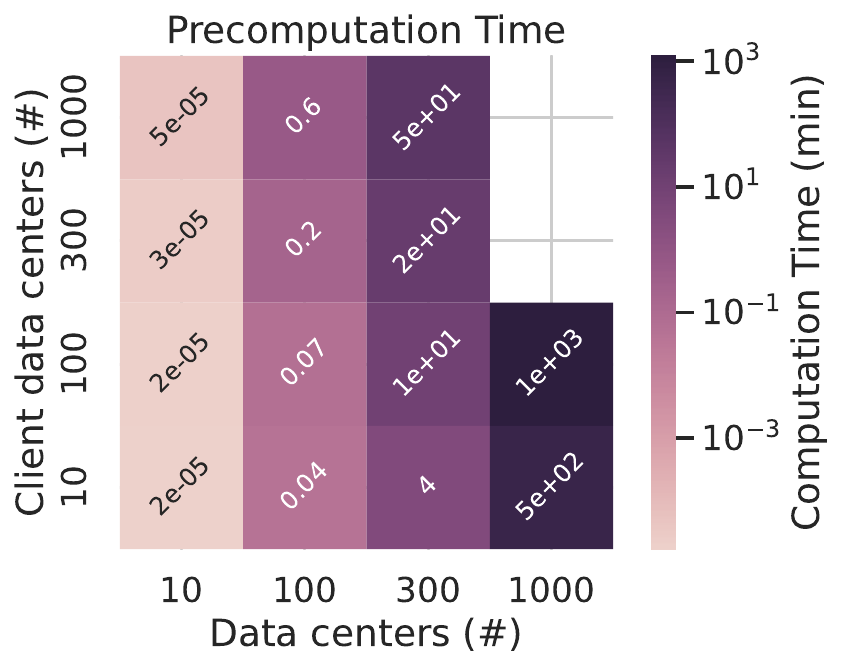}
    \caption{Worst-case single-threaded computation time for \sys{}s when scaling the number of data centers considered for storing objects and the number of client data centers determining the read/write frequency parameter dimensions.}
    \label{fig:construction}
\end{figure}

Figure~\ref{fig:construction} shows the worst-case computation time of Algorithm~\ref{lst:construction} under a range of problem sizes.
Here, we scale the number of data centers that may store objects and the number of read/write frequency parameters of client data centers---scaling the decision space and the cost model dimensions.
We report the computation time for single-threaded computation for the case that all possible data center pairs turn out to be optimal \placements.
We can observe steeply increasing, as can be expected from the nested loops of the algorithm.
Though, the single-threaded computation only takes 20 min at Cloudflare's scale (300 data centers).
Beyond this scale, computation takes $>$7h where filtering of non-optimal placements dominates.

\emph{\textbf{Implications:}}
While seeming brute-force, the offline precomputation of the \sys{} for fault-tolerant object placement is tractable.
Enumerating the quadratic search space is manageable for the number of data centers that cloud vendors have today, and data-parallel computation will further offset overhead.
We point out PQO techniques and further research avenues for precomputation of \sys{}s for more challenging cases in \S\ref{sec:vision}.

\begin{figure*}[ht]
    \begin{minipage}[b]{0.35\linewidth}
            \centering
            \includegraphics[width=\linewidth, trim={0 1.75cm 0 2cm},clip]{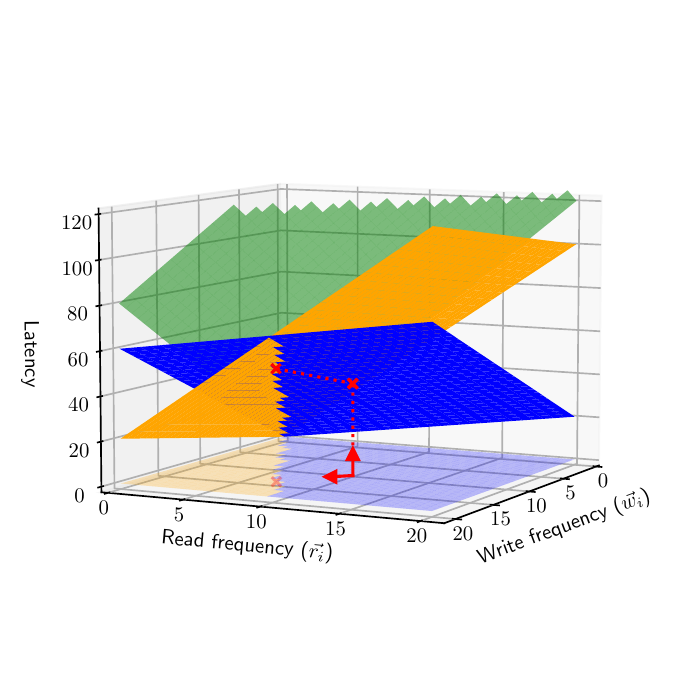}
            \caption{Illustration of vertical and directed ray-shooting onto the hyperplanes that represent the latency of decisions under any read/write frequency.}
            \label{fig:case1:illustration}        
    \end{minipage}
    \hfill
    \begin{minipage}[b]{0.6\linewidth}
        \centering
        \includegraphics[height=4cm]{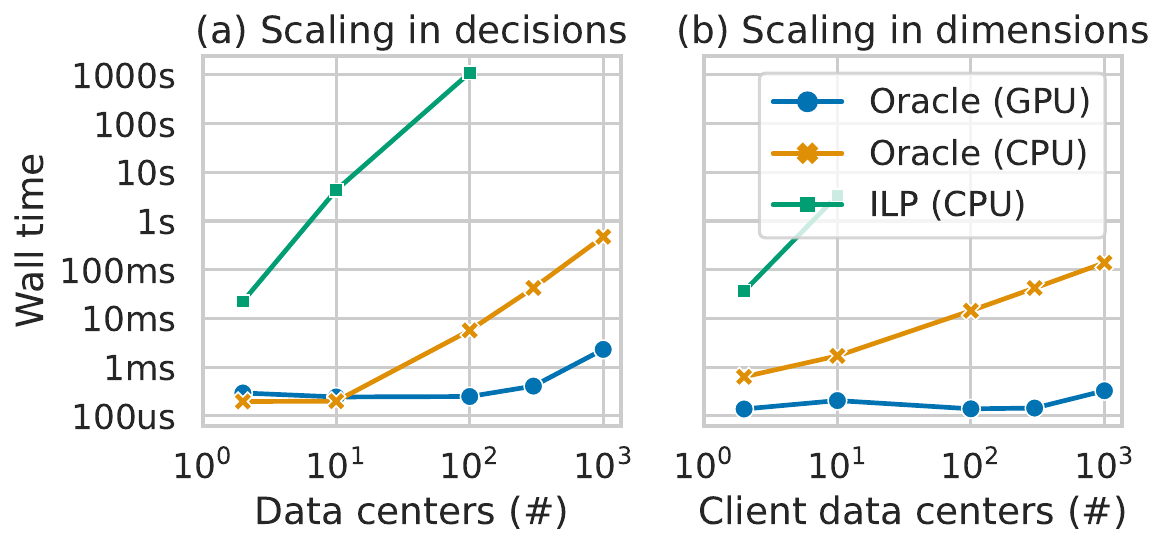}%
        \includegraphics[height=4cm]{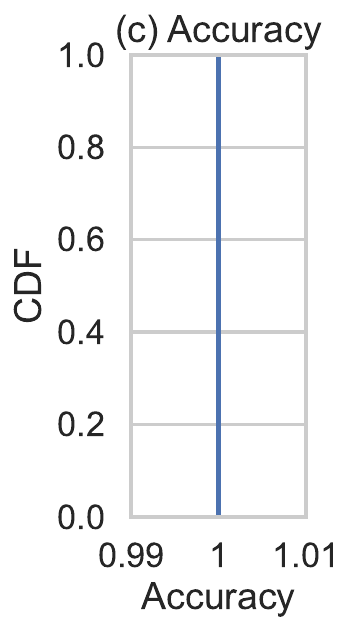}
        \vspace{-1em}
        \caption{Minimization with the \sys{} versus the ILP.
        (a) shows the scalability in the number of decisions in the search space (data centers $D$) under $|C|=300$ and
        (b) in the dimensions of the cost model (client data centers $C$) under $|D|=300$.
        (c) shows the accuracy of the \sys{} relative to the ILP.
        }
        \label{fig:case1:experiment}
    \end{minipage}
    \vspace{-1em}
\end{figure*}

\subsection{Efficient \& Accurate Minimization}
\label{sec:use-cases-minimization}
\textbf{\textit{Which?}}
Consider the processes of deciding which two data centers are the optimal places to store the copies of an object,
as a function of recorded write/read frequencies ($\vec{w}, \vec{r}$) and the network latency $\vec{l}$.
This maps directly to the typical minimization task of Eqs.\ref{eq:argmin}--\ref{eq:constraint}.
Rather than solving the complex ILP of Algorithm~\ref{lst:ILP},
the existence of a \sys{} allows solving this task orders of magnitude faster but equally accurately through simple geometric intersection search---namely \textit{vertical ray-shooting}.

We illustrate vertical ray-shooting on a simplified oracle in Figure~\ref{fig:case1:illustration}.
We consider the ``bounding box'' (i.e., convex polytope) formed by the $parameter \times latency$ hyperplanes of all \placements in the oracle.
Each point on the surface of this polytope is an optimum latency for some vector of input parameters $\vec{w}$ and $\vec{r}$.
Given a specific input, we can start at the ``floor'' of the $parameter \times latency$ space where latency=0 (i.e., $[\vec{w}, \vec{r}, 0]$),
and compute the closest intersection point along a vertical line with the planes above.
This point in $parameter \times latency$ space identifies the plane with the lowest latency for the given parameter---the optimal \placement.

Algorithm~\ref{lst:queryOptimum} implements vertical ray-shooting with a single matrix-vector multiplication.
As shown, vertical ray-shooting is a special case that allows aggressive simplification compared to the general ray-shooting used later.
As such, the minimization task on the \sys{} is very cheap and well suited for GPU-acceleration.
Also, the results are exact by construction---given that the optimal decisions are contained in the \sys{}.

\begin{algorithm}[t]
\begin{algorithmic}[1]
\Procedure{query\_minimum}{$o$, $\vec{a}$}
\LComment{Vertical ray-shooting: argmin on matrix-vector mul.}
\State
$idx,$ $value \gets \argmin_{} (o.\overrightarrow{planes} \cdot \vec{a})$
\State \Return $(o.placements_{idx},\;value)$
\EndProcedure
\end{algorithmic}
\caption{Algorithm for querying the oracle $o$ for the optimal decisions with minimal latency under the given parameter $\vec{a}=[\vec{w},\vec{r}]$.}
\label{lst:queryOptimum}
\end{algorithm}

Figure~\ref{fig:case1:experiment} shows the significant optimization speedup of minimization tasks with the \sys{} compared to the traditional ILP.
In these benchmarks we compare the optimization time of
solving the ILP with Gurobi on a 40-core CPU
and the \sys{} on the CPU as well as a Nvidia V100 GPU\footnote{ILP solvers rely on sequential refinement of solutions that preclude parallelism. Commercial solvers thus operate only on CPUs and do not support GPU acceleration~\cite{gurobi_gpu}.}.

Figure~\ref{fig:case1:experiment}(a) shows the optimization time when scaling the number of data centers $D$ and hence the number of feasible \placements under $|C|=300$---note the log scales.
This affects the complexity of solving the ILP and the size of the \sys{}.
We see the high optimization overhead of the ILP, starting at >10ms for only 2 data centers and reaching >1000s for 100 data centers.
Even for 1000 data centers, the \sys{} remains sub second on the CPU and sub 10 milliseconds on the GPU.
The \sys{} is more than 4 orders of magnitude faster than the ILP.

Figure~\ref{fig:case1:experiment}(b) shows the optimization time when instead scaling the number of client data centers $|C|$, i.e., the number of parameters of the write/read frequencies in the model, under $|D|=300$.
Also under high-dimensional parameters, the \sys{} is orders of magnitude faster.

Finally, Figure~\ref{fig:case1:experiment}(c) shows the accuracy of the \sys{} compared to the ILP for both of the above experiments.
The \sys{} is perfectly accurate---it yields exactly the same optimization results.

\insight{}
Cloud oracles are fast while achieving exact accuracy!
Minimization tasks that use the \sys{} are orders of magnitude faster than ILPs, even if the \sys{} has to cover a large search space, e.g., all 300 Cloudflare data centers,
and the cost functions have hundreds of parameter dimensions.
Using \sys{}s makes accurate minimization affordable at cloud scale and benefits from accelerator hardware.
\vspace{-2em}
\subsection{Efficient Scenario Planning}
\label{sec:what-if}
\textbf{\textit{What If?}}
Consider now a situation in which our online retailer wants to expand from the US to the EU and has to negotiate discounted contracts for reserving capacity in specific EU data centers.
The retailer must explore ``what if'' scenarios to determine how the choice of data centers will change the access patterns and thus affect optimal object placements and access latencies. Such scenario planning is based on ``what if'' analyses, which typically uses stochastic simulation to iteratively explore a vast number of possible scenarios to present anticipated latency profiles.

Simulation techniques like Monte Carlo simulation essentially consist of an outer loop that draws sample parameters from a trace or model, and an inner loop that computes the optimal decision and resulting costs for a given sample~\cite{kroese_why_2014}.
These simulations need to evaluate many samples to confidently judge scenarios. This is infeasible if the inner loop is an ILP that takes seconds or even minutes to compute.
Our \sys{} approach can be used as an alternative lightning-fast inner loop, allowing these simulations to do what they are intended to do: aggressively explore many samples and scenarios.

\begin{algorithm}[t]
\begin{algorithmic}[1]
\Procedure{simulate\_scenario}{$o$, $o'$, $Samples$}
\State $improvements \gets \{\}$ \Comment{Latency improvements of $\vec{a} \in Samples$}
\ForAll {$\vec{a} \in Samples$}
\LComment{Ratio of new optimal latency vs. prior optimal latency}
\State $i \gets min(o'.planes \cdot \vec{a}) / min(o.planes \cdot \vec{a})$
\State $improvements \gets improvements \bigcup \{i\}$
\EndFor
\State \Return $mean(improvements)$, $confidence(improvements)$
\EndProcedure
\end{algorithmic}
\caption{Algorithm for simple Monte Carlo simulation computing the expected latency improvement between two scenarios. $o$ is the \sys{} for the US-only scenario and $o'$ the \sys{} for the expansion to US+EU.}
\label{lst:querySimulation}
\end{algorithm}

Algorithm~\ref{lst:querySimulation} shows that Monte Carlo simulation immediately follows the implementation of the minimization task.
Notably, besides solving for a single sample quickly through vectorization, this simulation also easily scales across several GPUs, simply by replicating the \sys{} matrices and partitioning the samples.

\begin{figure}
    \centering
    \includegraphics[height=4cm]{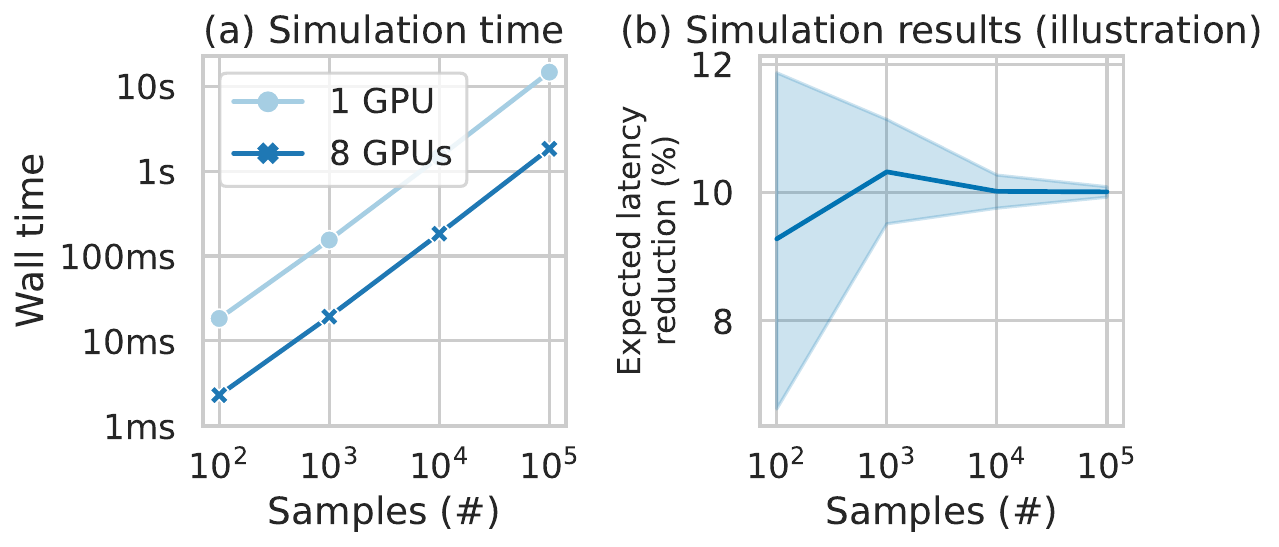}
    \vspace{-1em}
    \caption{Monte Carlo simulation with the \sys{} for 300 data centers ($D$) and client data centers ($C$). (a) Scalability of the \sys{} in the number of samples optimized on 1 GPU and 8 GPUs. (b) An illustration how simulation results (median and 95-confidence interval) converge under the number of samples from a trace.}
    \label{fig:case2:experiment}
\end{figure}

Figure~\ref{fig:case2:experiment} demonstrates the Monte Carlo simulation on top of the \sys{} for judging the latency improvement when expanding from the US to the EU.
We randomly generate access frequency samples and compute the expected latency reduction at the scale of Cloudflare ($|D|=|C|=300$).
Figure~\ref{fig:case2:experiment}(a) shows that the simulation time scales linearly in the number of samples and GPUs.
Figure~\ref{fig:case2:experiment}(b) illustrates that thousands of samples from a trace may be necessary for the median and 95-confidence-interval to converge.
In combination, we can see that the \sys{} on the GPU can provide confident simulation results within milliseconds to seconds---interactive response time.

\insight{}
The \sys{} approach makes stochastic simulation feasible at the scale of the modern clouds. This unlocks a wide range of planning and forecasting scenarios. One interesting family of applications are system configuration tasks like storage placement or other system parameter (``knob'') tuning. A wide range of performance profiles can be created with very low overhead and either presented visually to a human for decision-making, or fed into an automatic configuration service. In either case, decisions can be made to account for a wide variety (or a probability distribution) of ``what if'' scenarios, each one accurately assessed via the \sys{}.
Stochastic methods are not only interesting for applications on top of \sys{}s, but are also interesting avenues for the construction of robust \sys{}s over uncertain cost models, as we point out in \S\ref{sec:vision}.
\subsection{Efficient Migration Planning}
\label{sec:use-cases-drift}

\textbf{\textit{When?}} Finally, consider the task of migration planning for large objects with changing access pattern, e.g., due to diurnal or seasonal popularity in different regions of the globe~\cite{gracia-tinedo_dissecting_2015,taft_p-store_2018}.
Our retailer may want to migrate objects to improve access latency.
However, migrating large objects takes time.
Suppose the online retailer monitors the access pattern and can estimate the migration duration ($x$), knowing the available network bandwidth.
They want to start migrating at $t-x$, where $t$ is the \emph{time when the current placement starts to become suboptimal}. But when is time $t$?

\begin{algorithm}[t]
\begin{algorithmic}[1]
\Procedure{query\_drift\_directed}{$o$, $\vec{a}$, $\vec{d}$}
  \State $curr, cost \gets \Call{query\_minimium}{o, \vec{a}}$
  \State $\overrightarrow{plane} \gets o.\overrightarrow{planes}_{curr}$ \Comment{Plane of current optimum}
  \State $\vec{s} \gets [\vec{a}, cost]$ \Comment{Start point on plane}
  \State $\vec{d'} \gets \vec{d} - ((\vec{d} \cdot \overrightarrow{plane})*\overrightarrow{plane})$ \Comment{Projected drift along plane}
  \State $\overrightarrow{Other} \gets o.\overrightarrow{planes} \setminus \{\overrightarrow{plane}\}$ \Comment{Matrix of remaining planes}
  
  \LComment{Ray-shooting from start $\vec{s}$ in direction of projected drift $\vec{d'}$}
  \State $\overrightarrow{distance} \gets (\overrightarrow{Other} \cdot -\vec{s}) / (\overrightarrow{Other} \cdot \vec{d'})$
  \ForAll{$i \in 0..n$} \Comment{Find closest positive distance}
  \If{$\overrightarrow{distance}_i > 0$ \&\& $\overrightarrow{distance}_i < min\_dist$} 
  \State $min\_idx \gets i, min\_dist \gets \overrightarrow{distance}_i$
  \EndIf
  \EndFor
  \State $\overrightarrow{closest} \gets \vec{s} + (\vec{d'} * min\_dist)$ \Comment{Closest intersection point} 
  \State $next \gets o.decisions_{min\_idx}$ \Comment{Next optimal decision}
  \State $\overrightarrow{param\_next} \gets \overrightarrow{closest}_{0,\dots,n-1}$ \Comment{Next parameter vector}
  \State $cost\_next \gets \overrightarrow{closest}_n$ \Comment{Next latency}

\State \Return $(next, \overrightarrow{param\_next}, cost\_next, min\_dist)$ 
\EndProcedure
\end{algorithmic}
\caption{Algorithm for querying the oracle $o$ for the next optimal decision under the drift $\vec{d}$ of parameter $\vec{a}$.}
\label{lst:drift}
\end{algorithm}

With current approaches, one would have to solve a series of ILPs over quantized time steps to find time $t$.
This exacerbates the already high overhead, forcing current large-scale industrial systems to take a reactive migration approach~\cite{annamalai_sharding_2018}.
In this situation, we can take advantage of the geometric nature of the \sys{} to answer new classes of queries.

\emph{Planning under predictable drift.}
Consider predictable diurnal access pattern. Our retailer may want to migrate objects to evening locations, when people get home and have time for online shopping.
In this situation, the \sys{} can answer the query: 
\emph{How far can the parameters drift in a given direction without changing the optimum?} If we know the direction and rate of change, we can use such a query to tell us precisely when the current optimum will no longer apply; this allows us to plan ahead for that eventuality.

We can directly compute when to migrate under given drift via \emph{directed ray-shooting} on the hyperplanes of the convex polytope representing the \sys{}.
As illustrated in Figure~\ref{fig:case1:illustration}, we can view the problem as a point in $parameter \times latency$ space that moves in the direction and speed of a given parameter drift vector.
The point moves along the hyperplane of the current optimum and eventually collides with a neighboring hyperplane---at $parameter \times latency$ coordinates where the current optimum has the same latency as its neighbor.
This intersection point identifies the next optimal \placementS including the parameters, latency (cost), and time when this next optimum is reached. Due to linearity and convexity, linear intersection search in the direction of the drift suffices to find the intersection point.

Algorithm~\ref{lst:drift} implements the query for directed drift.
It first projects the given drift $\vec{d}$ onto the current optimal hyperplane in $parameter \times latency$ space and then computes the closest intersection with the remaining hyperplanes from the current $parameter \times latency$ point.

\emph{Planning under undirected drift.}

Now consider predictable access peaks.
Before events like Black Friday, our retailer must plan migration of all their objects in advance.
Based on historic information, the retailer can estimate the access frequency of each object and anticipate its optimal placement under some confidence.
One may apply Monte Carlo simulation but in this situation the simulation for each object would require significant computation time, even with the \sys{}.
The \sys{} offers a shortcut via the query: \emph{What is the least parameter drift that changes the optimal configuration?}
This query tells us if migration of an object will be necessary.
If the confidence interval is smaller than the output drift,
then the object will have a single optimal placement.
Otherwise, computationally intensive simulation needs to anticipate a migration plan.

We can efficiently compute closest neighboring optimum in any drift direction via intersection search---after relaxing the problem.
This query can be viewed as a point drifting within the current hyperplane and
we have to find the closest intersection point with the remaining hyperplanes in \emph{any drift direction}.
We are not given a drift vector,
instead we have to compute the vector $\vec{d^*}$ that lies inside the current hyperplane and minimizes the distance from the current point on $x_0$ to some intersection point $x^*$:
\begin{align}
    &min\quad t,\text{ s.t.} &\text{\# minimal distance } t\\
    &x_0 + t \vec{d^*} = x^* \wedge \exists p \in P: x^* \in p&\text{\# intersection with }p_i \\
    &\vec{d^*} \cdot \vec{n} = 0 &\text{\# parallel to current plane}\\
    &||\vec{d^*}||_2 = 1 &\text{\# normalized distance}
\end{align}

Rather than falling back to an expensive solver,
we can utilize the \sys{} when relaxing the problem.
That is, we can utilize \emph{Lagrangian relaxation} to formulate an unconstrained minimization problem for the closest intersection to each neighboring hyperplane individually.
We can efficiently compute this relaxed problem and then simply take the minimum to find the overall closest intersection---the least drift that changes the current optimum.
Appendix~\ref{app:derivation} details our relaxation that 
Algorithm~\ref{lst:drift_conservative} implements to answer the undirected drift query.

\begin{algorithm}[h]
\begin{algorithmic}[1]
\Procedure{query\_drift\_undirected}{$o$: Oracle, $\vec{a}$: Parameter}
\State $curr, cost \gets \Call{query\_minimium}{o, \vec{a}}$
  \State $\overrightarrow{plane} \gets o.\overrightarrow{planes}_{curr}$ \Comment{Plane of current optimum}
  \State $\overrightarrow{Other} \gets o.\overrightarrow{planes} \setminus \{\overrightarrow{plane}\}$ \Comment{Matrix of remaining planes}
  \State $\vec{s} \gets [\vec{a}, cost]$ \Comment{Start point at parameter and cost coordinates}
  \LComment{Solving closest intersection for each plane via relaxation}
  \ForAll{$\vec{p_i} \in \overrightarrow{Other}$}
  \State $\alpha_i \gets \vec{p_i} \cdot \overrightarrow{plane}$
  \State $\vec{v_i} \gets (\alpha_i\overrightarrow{plane} - \vec{p_i}) / \sqrt{1-\alpha_i^2}$
  \State $distance_i \gets (\vec{p_i} \cdot -\vec{s}) / (\vec{p_i} \cdot \vec{v_i})$
  \EndFor
  \State $min\_idx, min\_dist \gets \text{argmin}(\overrightarrow{distance})$
  \State $\overrightarrow{closest} \gets \vec{s} + (\vec{V_{min\_idx}} * min\_dist)$ \Comment{Closest intersection}
  \State $next \gets o.decisions_{min\_idx}$ \Comment{Next optimal decision}
  \State $\overrightarrow{param\_next} \gets \overrightarrow{closest}_{0,\dots,n-1}$ \Comment{Next parameter vector}
  \State $cost\_next \gets \overrightarrow{closest}_n$ \Comment{Next latency}

\State \Return $(next, \overrightarrow{param\_next}, cost\_next, min\_dist)$
\EndProcedure
\end{algorithmic}
\caption{Algorithm for querying the oracle for the next optimal decision under unknown drift of parameter $\vec{a}$.}
\label{lst:drift_conservative}
\end{algorithm}

\begin{figure}[t]
    \centering
    \includegraphics[height=4cm]{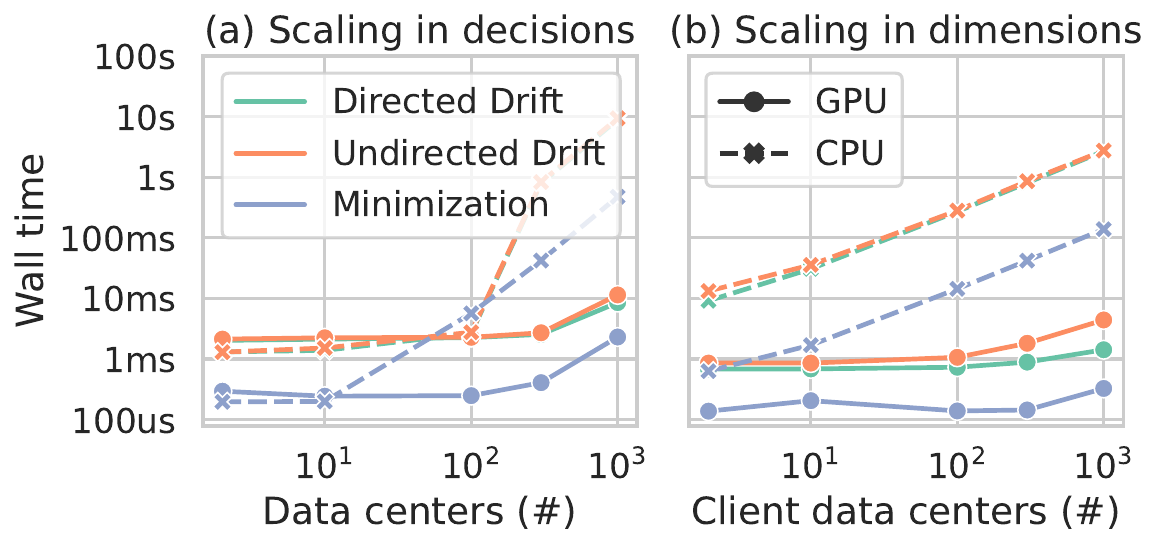}%
    \caption{Scalability of the drift optimization tasks compared to the simple minimization on the \sys{}. (a) shows the scalability in the number of decisions in the search space (data centers $D$) under $|C|=300$ and (b) in the dimensions of the cost model (client data centers $C$) under $|D|=300$.
    }
    \label{fig:case3:experiment}
\end{figure}

Figure~\ref{fig:case3:experiment} shows the optimization time of the advanced drift queries in comparison to the simple minimization task.
Here, we repeat the scaling experiment from earlier.
We can see that even for these advanced optimization tasks the optimization time remains sub 10 milliseconds on the GPU and overall very manageable compared to the ILP solving the simple minimization.

\insight{}
The low-overhead drift queries highlight the power of geometric computations on the \sys{} and indicate interesting applications in dynamic optimization.
Not only do drift queries allow us to plan ahead for a given drift,
but they may also benefit time series-based dynamic optimization.
A common approach is to predict a time series of workload parameters and then optimize over quantized time slots~\cite{westenbroek_stability_2021,rawlings_model_2017,goel_thinking_2017,mansouri_cost_2019, chen_using_2016}.
Fine-grained time slots allow for high accuracy but suffer from high overhead while
coarse-grained time slots minimize overhead but also worsen accuracy.
Instead, traveling the surface of the time series and the cost functions has the potential to offer high accuracy and avoid high overhead (see annealing methods described in \S\ref{sec:vision:ml}).
\section{Toward Cloud Oracles}
\label{sec:vision}

We have one major question left to address: For what kind of problems can we compute a \sys{}?
Akin to PQO, offline optimization for a \sys{} requires enumeration of candidate decisions and subsequent selection of optimal decisions based on a cost model~\cite{doshi_kepler_2023,vaidya_leveraging_2021}.
The three conditions enabling efficient offline optimization are: 1) a relatively fixed cost model; 2) a mostly linear cost model; 3) independence of decisions.

The placement problem for Durable Objects fulfills these three conditions and has a small search space.
This enables straightforward enumeration and selection of decisions for the \sys{}.
We have seen that the resulting \sys{} offers large speedups and improved accuracy for a range of important optimization tasks.
Indeed, this beneficial scenario applies to placement problems that do not require replication and have elastic capacity.
For example, cost optimization of S3 objects has a linear pay-per-use pricing model and a search space of 99 AZs with each 6 storage tiers (594 decisions)~\cite{aws_s3_tiers, aws_datacenters}.

In contrast, ``bin-packing-style'' problems violate condition 3, hence making offline optimization extremely difficult.
For example, consider data placement to minimize average access latency for a DBMS that runs on a VM with local and remote storage.
Local storage has low access latency but limited capacity, so
we must decide which tables to pack in there.
For optimal placement, we must jointly decide for all tables at once, which blows up the problem with little chance of pruning.
Compared to independent decisions, this bin-packing problem appears combinatorially intractable to precompute offline.

Along the following three research questions, we now discuss problems that challenge the application of \sys{}s but offer promising directions:
\begin{enumerate}[noitemsep,nosep,left=5pt .. 17pt]
    \item How to make enumeration of candidate decisions tractable in large configuration spaces of cloud-scale systems?
    \item How to select optimal decisions under non-linear cost models?
    \item How to select optimal decisions under variance in cost models, e.g., network latency?
\end{enumerate}

\vspace{-1em}
\subsection{Enumeration in Large Search Spaces}
\label{sec:vision:pqo}

Placement \emph{with replication} still fits our three conditions, but optimizing for combinations of placements blows up the search space.
For example, consider geo-replication of objects for latency minimization or geo-replication of DBMSs for fault-tolerance.
These problems challenge efficient candidate enumeration.
Facing a massive search space, one important lesson to learn from (parametric) query optimization is that effective pruning is imperative~\cite{leis_query_2018,trummer_multi-objective_2017,bruno_configuration-parametric_2008}.
No matter how efficient the selection, excessive enumeration of candidates will prohibit tractable offline optimization.
Pruning is critical, but it is heuristic.
The challenge is that pruning must exploit problem structure,
making algorithms hard to design and their performance case-sensitive.

An interesting direction for effective pruning in the cloud is algorithmic meta-optimization~\cite{comden_online_2019,gottschlich_three_2018,yi_automated_2023}.
Given the scale and diversity of the cloud, manual design and selection of pruning algorithms is challenging.
Instead, meta-optimization of pruning algorithms could be an approach for automatic classification of problem structure and algorithm specialization.
Algorithmic meta-optimization is an ongoing research effort and may benefit from cost estimation and other techniques à la query optimization.

\subsection{Decisions Selection under Non-linear Costs}
\label{sec:vision:ml}

Life is not always linear, and neither are cost models.
For example, state-of-the-art cost models for runtime estimation are learned models.
For application to these problems the computation of \sys{}s has to support learned cost models.
These cost models precisely capture complex relationships between workload parameters and costs,
as shown for cardinality and cost estimation~\cite{ready_for_cardinality,marcus_bao_2021,hilprecht_zero-shot_2022,hilprecht_deepdb_2020}.
However, this complicates the navigation of their cost surface, required for selecting the optimal decisions for the \sys{}.

Gradient descent, annealing, and differential evolution~\cite{onwubolu_new_2004,storn_differential_1997,ioannidis_parametric_1997} offer directions for computing \sys{}s on models with complex cost surfaces.
Convex linear models allow efficient computation in \sys{}s, as their smooth surfaces obviate the optimal lowest-cost decisions.
Learned models have rough surfaces with hills that obfuscate lowest-cost decisions.
A simple idea to compute optimal decisions under complex cost models is the pair-wise comparison of candidates, similar to the multi-objective PQO approach~\cite{trummer_multi-objective_2017}.
That is, the cost of two candidates are the learned functions $f^{*}(\vec{a})$ and $g^{*}(\vec{a})$.
The cost difference between the two candidates is the function $h(\vec{a})=f^{*}(\vec{a}) - g^{*}(\vec{a})$.
If $h$ has a negative value, there exists a parameter for which $f^{*}$ is cheaper than $g^{*}$ otherwise $f^{*}$ is dominated and never optimal.
We can use annealing methods---that Ioannidis et al.~\cite{ioannidis_parametric_1997} originally proposed for PQO---to search the surface of $h$ for negative values.
Afterwards, we can construct a \sys{} with a compact model as lookup structure for selected optimal decisions.
Research on robust knowledge distillation of a compact ``student model'' for given decisions is required, see~\cite{allen-zhu_towards_2022}.
Also, efficient methods for computing optimal decisions are required for problems with large parameter and search spaces.

Note that annealing techniques are similarly interesting for solving complex optimization tasks on the surface of the cost model, like our drift query.

\vspace{-1em}
\subsection{Decisions Selection under Uncertainty}

Random variables that capture uncertainty challenge the static condition (1) for offline optimization.
For example, placement of objects onto edge locations may involve significant variation in network latency.
Also, monitoring of workload parameters like access frequency in a large distributed systems will not be exact, but may involve uncertainty due to sampling.
Particular challenges are how to derive a probabilistic cost model with random cost coefficients and
how to select optimal decisions offline in the presence of random cost coefficients and random workload parameters.

Interesting directions are learned cost models capturing uncertainty and robust parametric optimization.
For example, one may explore the transfer of learned cardinality estimation with uncertainty~\cite{liu_fauce_2021}.
A starting point for robust optimization is the work~\cite{lei_robust_2014}, which considers robust load distribution for distributed streaming systems.
Offloading robust optimization to the precomputation of \sys{}s not only leads to fast and robust online optimization for cloud systems, but may also provide opportunities for offline verification.
The behavior of online optimization under uncertainty as well as pure ML-based optimization is difficult to foresee.
The materialized decisions in a \sys{} instead are known explicitly, thus enabling more stringent offline verification.
\vspace{-1em}
\section*{Conclusion}
The cloud is both powerful and complex. To harness its power to the fullest, we need to make accurate and rapid choices in enormous decision spaces. Classical ILP does this accurately, but slowly. Machine learning approaches trade accuracy for speed,while often introducing ongoing costs of ML ops. We have proposed a best-of-both-worlds design, \sys{}s. This approach follows the ML script of slow offline training combined with fast online inference but does so with the data-independent explainable solutions of ILP.


\newcommand{\shownote}[1]{\unskip} 
\bibliographystyle{ACM-Reference-Format}
\bibliography{skycache-zotero,skycache-local,conor}


\begin{thebibliography}{57}


\ifx \showCODEN    \undefined \def \showCODEN     #1{\unskip}     \fi
\ifx \showDOI      \undefined \def \showDOI       #1{#1}\fi
\ifx \showISBNx    \undefined \def \showISBNx     #1{\unskip}     \fi
\ifx \showISBNxiii \undefined \def \showISBNxiii  #1{\unskip}     \fi
\ifx \showISSN     \undefined \def \showISSN      #1{\unskip}     \fi
\ifx \showLCCN     \undefined \def \showLCCN      #1{\unskip}     \fi
\ifx \shownote     \undefined \def \shownote      #1{#1}          \fi
\ifx \showarticletitle \undefined \def \showarticletitle #1{#1}   \fi
\ifx \showURL      \undefined \def \showURL       {\relax}        \fi
\providecommand\bibfield[2]{#2}
\providecommand\bibinfo[2]{#2}
\providecommand\natexlab[1]{#1}
\providecommand\showeprint[2][]{arXiv:#2}

\bibitem[Abadi et~al\mbox{.}(2020)]%
        {seattle_report}
\bibfield{author}{\bibinfo{person}{Daniel Abadi}, \bibinfo{person}{Anastasia
  Ailamaki}, \bibinfo{person}{David Andersen}, \bibinfo{person}{Peter Bailis},
  \bibinfo{person}{Magdalena Balazinska}, \bibinfo{person}{Philip Bernstein},
  \bibinfo{person}{Peter Boncz}, \bibinfo{person}{Surajit Chaudhuri},
  \bibinfo{person}{Alvin Cheung}, \bibinfo{person}{AnHai Doan},
  {et~al\mbox{.}}} \bibinfo{year}{2020}\natexlab{}.
\newblock \showarticletitle{The Seattle report on database research}.
\newblock \bibinfo{journal}{\emph{ACM Sigmod Record}} \bibinfo{volume}{48},
  \bibinfo{number}{4} (\bibinfo{year}{2020}), \bibinfo{pages}{44--53}.
\newblock


\bibitem[Agrawal et~al\mbox{.}(2019)]%
        {cloudy_with_dbms}
\bibfield{author}{\bibinfo{person}{Ashvin Agrawal}, \bibinfo{person}{Rony
  Chatterjee}, \bibinfo{person}{Carlo Curino}, \bibinfo{person}{Avrilia
  Floratou}, \bibinfo{person}{Neha Gowdal}, \bibinfo{person}{Matteo
  Interlandi}, \bibinfo{person}{Alekh Jindal}, \bibinfo{person}{Kostantinos
  Karanasos}, \bibinfo{person}{Subru Krishnan}, \bibinfo{person}{Brian Kroth},
  {et~al\mbox{.}}} \bibinfo{year}{2019}\natexlab{}.
\newblock \showarticletitle{Cloudy with high chance of DBMS: A 10-year
  prediction for Enterprise-Grade ML}.
\newblock \bibinfo{journal}{\emph{arXiv preprint arXiv:1909.00084}}
  (\bibinfo{year}{2019}).
\newblock


\bibitem[Allen-Zhu and Li(2022)]%
        {allen-zhu_towards_2022}
\bibfield{author}{\bibinfo{person}{Zeyuan Allen-Zhu} {and}
  \bibinfo{person}{Yuanzhi Li}.} \bibinfo{year}{2022}\natexlab{}.
\newblock \showarticletitle{Towards {Understanding} {Ensemble}, {Knowledge}
  {Distillation} and {Self}-{Distillation} in {Deep} {Learning}}.
\newblock
\urldef\tempurl%
\url{https://openreview.net/forum?id=Uuf2q9TfXGA}
\showURL{%
\tempurl}


\bibitem[Amazon Web~Services(2023a)]%
        {aws_s3_tiers}
\bibfield{author}{\bibinfo{person}{Inc. or its~affiliates Amazon
  Web~Services}.} \bibinfo{year}{2023}\natexlab{a}.
\newblock \bibinfo{booktitle}{\emph{Amazon S3 Storage Classes}}.
\newblock
\urldef\tempurl%
\url{{https://aws.amazon.com/s3/storage-classes/}}
\showURL{%
\tempurl}


\bibitem[Amazon Web~Services(2023b)]%
        {aws_datacenters}
\bibfield{author}{\bibinfo{person}{Inc. or its~affiliates Amazon
  Web~Services}.} \bibinfo{year}{2023}\natexlab{b}.
\newblock \bibinfo{booktitle}{\emph{AWS Global Infrastructure}}.
\newblock
\urldef\tempurl%
\url{{https://aws.amazon.com/about-aws/global-infrastructure/}}
\showURL{%
\tempurl}


\bibitem[Annamalai et~al\mbox{.}(2018)]%
        {annamalai_sharding_2018}
\bibfield{author}{\bibinfo{person}{Muthukaruppan Annamalai},
  \bibinfo{person}{Kaushik Ravichandran}, \bibinfo{person}{Harish Srinivas},
  \bibinfo{person}{Igor Zinkovsky}, \bibinfo{person}{Luning Pan},
  \bibinfo{person}{Tony Savor}, \bibinfo{person}{David Nagle}, {and}
  \bibinfo{person}{Michael Stumm}.} \bibinfo{year}{2018}\natexlab{}.
\newblock \showarticletitle{Sharding the {Shards}: {Managing} {Datastore}
  {Locality} at {Scale} with {Akkio}}. In \bibinfo{booktitle}{\emph{13th
  {USENIX} {Symposium} on {Operating} {Systems} {Design} and {Implementation}
  ({OSDI} 18)}}. \bibinfo{publisher}{USENIX Association},
  \bibinfo{address}{Carlsbad, CA}, \bibinfo{pages}{445--460}.
\newblock
\showISBNx{978-1-939133-08-3}
\urldef\tempurl%
\url{https://www.usenix.org/conference/osdi18/presentation/annamalai}
\showURL{%
\tempurl}


\bibitem[Armenatzoglou et~al\mbox{.}(2022)]%
        {redshift_reinvented}
\bibfield{author}{\bibinfo{person}{Nikos Armenatzoglou}, \bibinfo{person}{Sanuj
  Basu}, \bibinfo{person}{Naga Bhanoori}, \bibinfo{person}{Mengchu Cai},
  \bibinfo{person}{Naresh Chainani}, \bibinfo{person}{Kiran Chinta},
  \bibinfo{person}{Venkatraman Govindaraju}, \bibinfo{person}{Todd~J Green},
  \bibinfo{person}{Monish Gupta}, \bibinfo{person}{Sebastian Hillig},
  {et~al\mbox{.}}} \bibinfo{year}{2022}\natexlab{}.
\newblock \showarticletitle{Amazon Redshift re-invented}. In
  \bibinfo{booktitle}{\emph{Proceedings of the 2022 International Conference on
  Management of Data}}. \bibinfo{pages}{2205--2217}.
\newblock


\bibitem[Azure(2023)]%
        {azure_obj}
\bibfield{author}{\bibinfo{person}{Microsoft Azure}.}
  \bibinfo{year}{2023}\natexlab{}.
\newblock \bibinfo{booktitle}{\emph{Azure Blob Storage}}.
\newblock
\urldef\tempurl%
\url{{https://azure.microsoft.com/en-us/products/storage/blobs/}}
\showURL{%
\tempurl}


\bibitem[Baruah et~al\mbox{.}(2022)]%
        {matei_partitioning}
\bibfield{author}{\bibinfo{person}{Nirvik Baruah}, \bibinfo{person}{Peter
  Kraft}, \bibinfo{person}{Fiodar Kazhamiaka}, \bibinfo{person}{Peter Bailis},
  {and} \bibinfo{person}{Matei Zaharia}.} \bibinfo{year}{2022}\natexlab{}.
\newblock \showarticletitle{Parallelism-Optimizing Data Placement for Faster
  Data-Parallel Computations}.
\newblock \bibinfo{journal}{\emph{Proceedings of the VLDB Endowment}}
  \bibinfo{volume}{16}, \bibinfo{number}{4} (\bibinfo{year}{2022}),
  \bibinfo{pages}{760--771}.
\newblock


\bibitem[Bruno and Nehme(2008)]%
        {bruno_configuration-parametric_2008}
\bibfield{author}{\bibinfo{person}{Nicolas Bruno} {and}
  \bibinfo{person}{Rimma~V. Nehme}.} \bibinfo{year}{2008}\natexlab{}.
\newblock \showarticletitle{Configuration-parametric query optimization for
  physical design tuning}. In \bibinfo{booktitle}{\emph{Proceedings of the 2008
  {ACM} {SIGMOD} international conference on {Management} of data}}
  \emph{(\bibinfo{series}{{SIGMOD} '08})}. \bibinfo{publisher}{Association for
  Computing Machinery}, \bibinfo{address}{New York, NY, USA},
  \bibinfo{pages}{941--952}.
\newblock
\showISBNx{978-1-60558-102-6}
\urldef\tempurl%
\url{https://doi.org/10.1145/1376616.1376710}
\showDOI{\tempurl}
\newblock
\shownote{23 citations (Crossref) [2023-07-17]}.


\bibitem[Chasins et~al\mbox{.}(2022)]%
        {chasins_sky_2022}
\bibfield{author}{\bibinfo{person}{Sarah Chasins}, \bibinfo{person}{Alvin
  Cheung}, \bibinfo{person}{Natacha Crooks}, \bibinfo{person}{Ali Ghodsi},
  \bibinfo{person}{Ken Goldberg}, \bibinfo{person}{Joseph~E. Gonzalez},
  \bibinfo{person}{Joseph~M. Hellerstein}, \bibinfo{person}{Michael~I. Jordan},
  \bibinfo{person}{Anthony~D. Joseph}, \bibinfo{person}{Michael~W. Mahoney},
  \bibinfo{person}{Aditya Parameswaran}, \bibinfo{person}{David Patterson},
  \bibinfo{person}{Raluca~Ada Popa}, \bibinfo{person}{Koushik Sen},
  \bibinfo{person}{Scott Shenker}, \bibinfo{person}{Dawn Song}, {and}
  \bibinfo{person}{Ion Stoica}.} \bibinfo{year}{2022}\natexlab{}.
\newblock \bibinfo{title}{The {Sky} {Above} {The} {Clouds}}.
\newblock
\newblock
\urldef\tempurl%
\url{https://doi.org/10.48550/arXiv.2205.07147}
\showDOI{\tempurl}
\newblock
\shownote{arXiv:2205.07147 [cs]}.


\bibitem[Chen et~al\mbox{.}(2016)]%
        {chen_using_2016}
\bibfield{author}{\bibinfo{person}{Niangjun Chen}, \bibinfo{person}{Joshua
  Comden}, \bibinfo{person}{Zhenhua Liu}, \bibinfo{person}{Anshul Gandhi},
  {and} \bibinfo{person}{Adam Wierman}.} \bibinfo{year}{2016}\natexlab{}.
\newblock \showarticletitle{Using {Predictions} in {Online} {Optimization}:
  {Looking} {Forward} with an {Eye} on the {Past}}.
\newblock \bibinfo{journal}{\emph{ACM SIGMETRICS Performance Evaluation
  Review}} \bibinfo{volume}{44}, \bibinfo{number}{1} (\bibinfo{date}{June}
  \bibinfo{year}{2016}), \bibinfo{pages}{193--206}.
\newblock
\showISSN{0163-5999}
\urldef\tempurl%
\url{https://doi.org/10.1145/2964791.2901464}
\showDOI{\tempurl}


\bibitem[Cloud(2023)]%
        {gcp_obj}
\bibfield{author}{\bibinfo{person}{Google Cloud}.}
  \bibinfo{year}{2023}\natexlab{}.
\newblock \bibinfo{booktitle}{\emph{Cloud Storage}}.
\newblock
\urldef\tempurl%
\url{{https://cloud.google.com/storage/}}
\showURL{%
\tempurl}


\bibitem[Cloudflare(2023)]%
        {cloudflare_datacenters}
\bibfield{author}{\bibinfo{person}{Inc. Cloudflare}.}
  \bibinfo{year}{2023}\natexlab{}.
\newblock \bibinfo{booktitle}{\emph{The Cloudflare global network}}.
\newblock
\urldef\tempurl%
\url{{https://www.cloudflare.com/network/}}
\showURL{%
\tempurl}


\bibitem[Comden et~al\mbox{.}(2019)]%
        {comden_online_2019}
\bibfield{author}{\bibinfo{person}{Joshua Comden}, \bibinfo{person}{Sijie Yao},
  \bibinfo{person}{Niangjun Chen}, \bibinfo{person}{Haipeng Xing}, {and}
  \bibinfo{person}{Zhenhua Liu}.} \bibinfo{year}{2019}\natexlab{}.
\newblock \showarticletitle{Online {Optimization} in {Cloud} {Resource}
  {Provisioning}: {Predictions}, {Regrets}, and {Algorithms}}.
\newblock \bibinfo{journal}{\emph{Proceedings of the ACM on Measurement and
  Analysis of Computing Systems}} \bibinfo{volume}{3}, \bibinfo{number}{1}
  (\bibinfo{date}{March} \bibinfo{year}{2019}), \bibinfo{pages}{16:1--16:30}.
\newblock
\urldef\tempurl%
\url{https://doi.org/10.1145/3322205.3311087}
\showDOI{\tempurl}


\bibitem[Doshi et~al\mbox{.}(2023)]%
        {doshi_kepler_2023}
\bibfield{author}{\bibinfo{person}{Lyric Doshi}, \bibinfo{person}{Vincent
  Zhuang}, \bibinfo{person}{Gaurav Jain}, \bibinfo{person}{Ryan Marcus},
  \bibinfo{person}{Haoyu Huang}, \bibinfo{person}{Deniz Altinbüken},
  \bibinfo{person}{Eugene Brevdo}, {and} \bibinfo{person}{Campbell Fraser}.}
  \bibinfo{year}{2023}\natexlab{}.
\newblock \showarticletitle{Kepler: {Robust} {Learning} for {Parametric}
  {Query} {Optimization}}.
\newblock \bibinfo{journal}{\emph{Proceedings of the ACM on Management of
  Data}} \bibinfo{volume}{1}, \bibinfo{number}{1} (\bibinfo{date}{May}
  \bibinfo{year}{2023}), \bibinfo{pages}{109:1--109:25}.
\newblock
\urldef\tempurl%
\url{https://doi.org/10.1145/3588963}
\showDOI{\tempurl}
\newblock
\shownote{0 citations (Crossref) [2023-07-17]}.


\bibitem[Freemon(2022)]%
        {cloudflare_latency_1}
\bibfield{author}{\bibinfo{person}{Mike Freemon}.}
  \bibinfo{year}{2022}\natexlab{}.
\newblock \bibinfo{booktitle}{\emph{Optimizing TCP for high WAN throughput
  while preserving low latency}}.
\newblock
\urldef\tempurl%
\url{{https://blog.cloudflare.com/optimizing-tcp-for-high-throughput-and-low-latency/}}
\showURL{%
\tempurl}


\bibitem[Giotsas and Fayed(2021)]%
        {cloudflare_latency_2}
\bibfield{author}{\bibinfo{person}{Vasilis Giotsas} {and}
  \bibinfo{person}{Marwan Fayed}.} \bibinfo{year}{2021}\natexlab{}.
\newblock \bibinfo{booktitle}{\emph{“Look, Ma, no probes!” —
  Characterizing CDNs’ latencies with passive measurement}}.
\newblock
\urldef\tempurl%
\url{{https://blog.cloudflare.com/cdn-latency-passive-measurement/}}
\showURL{%
\tempurl}


\bibitem[Glockner(2023)]%
        {gurobi_gpu}
\bibfield{author}{\bibinfo{person}{Greg Glockner}.}
  \bibinfo{year}{2023}\natexlab{}.
\newblock \bibinfo{booktitle}{\emph{Does Gurobi support GPUs?}}
\newblock
\urldef\tempurl%
\url{{https://support.gurobi.com/hc/en-us/articles/360012237852-Does-Gurobi-support-GPUs}}
\showURL{%
\tempurl}


\bibitem[Goel et~al\mbox{.}(2017)]%
        {goel_thinking_2017}
\bibfield{author}{\bibinfo{person}{Gautam Goel}, \bibinfo{person}{Niangjun
  Chen}, {and} \bibinfo{person}{Adam Wierman}.}
  \bibinfo{year}{2017}\natexlab{}.
\newblock \showarticletitle{Thinking fast and slow: {Optimization}
  decomposition across timescales}. In \bibinfo{booktitle}{\emph{2017 {IEEE}
  56th {Annual} {Conference} on {Decision} and {Control} ({CDC})}}.
  \bibinfo{pages}{1291--1298}.
\newblock
\urldef\tempurl%
\url{https://doi.org/10.1109/CDC.2017.8263834}
\showDOI{\tempurl}


\bibitem[Gottschlich et~al\mbox{.}(2018)]%
        {gottschlich_three_2018}
\bibfield{author}{\bibinfo{person}{Justin Gottschlich},
  \bibinfo{person}{Armando Solar-Lezama}, \bibinfo{person}{Nesime Tatbul},
  \bibinfo{person}{Michael Carbin}, \bibinfo{person}{Martin Rinard},
  \bibinfo{person}{Regina Barzilay}, \bibinfo{person}{Saman Amarasinghe},
  \bibinfo{person}{Joshua~B. Tenenbaum}, {and} \bibinfo{person}{Tim Mattson}.}
  \bibinfo{year}{2018}\natexlab{}.
\newblock \showarticletitle{The three pillars of machine programming}. In
  \bibinfo{booktitle}{\emph{Proceedings of the 2nd {ACM} {SIGPLAN}
  {International} {Workshop} on {Machine} {Learning} and {Programming}
  {Languages}}} \emph{(\bibinfo{series}{{MAPL} 2018})}.
  \bibinfo{publisher}{Association for Computing Machinery},
  \bibinfo{address}{New York, NY, USA}, \bibinfo{pages}{69--80}.
\newblock
\showISBNx{978-1-4503-5834-7}
\urldef\tempurl%
\url{https://doi.org/10.1145/3211346.3211355}
\showDOI{\tempurl}


\bibitem[Gracia-Tinedo et~al\mbox{.}(2015)]%
        {gracia-tinedo_dissecting_2015}
\bibfield{author}{\bibinfo{person}{Raúl Gracia-Tinedo},
  \bibinfo{person}{Yongchao Tian}, \bibinfo{person}{Josep Sampé},
  \bibinfo{person}{Hamza Harkous}, \bibinfo{person}{John Lenton},
  \bibinfo{person}{Pedro García-López}, \bibinfo{person}{Marc
  Sánchez-Artigas}, {and} \bibinfo{person}{Marko Vukolic}.}
  \bibinfo{year}{2015}\natexlab{}.
\newblock \showarticletitle{Dissecting {UbuntuOne}: {Autopsy} of a
  {Global}-{Scale} {Personal} {Cloud} {Back}-{End}}. In
  \bibinfo{booktitle}{\emph{Proceedings of the 2015 {Internet} {Measurement}
  {Conference}}} \emph{(\bibinfo{series}{{IMC} '15})}.
  \bibinfo{publisher}{Association for Computing Machinery},
  \bibinfo{address}{New York, NY, USA}, \bibinfo{pages}{155--168}.
\newblock
\showISBNx{978-1-4503-3848-6}
\urldef\tempurl%
\url{https://doi.org/10.1145/2815675.2815677}
\showDOI{\tempurl}
\newblock
\shownote{event-place: Tokyo, Japan}.


\bibitem[Hilprecht and Binnig(2022)]%
        {hilprecht_zero-shot_2022}
\bibfield{author}{\bibinfo{person}{Benjamin Hilprecht} {and}
  \bibinfo{person}{Carsten Binnig}.} \bibinfo{year}{2022}\natexlab{}.
\newblock \showarticletitle{Zero-shot cost models for out-of-the-box learned
  cost prediction}.
\newblock \bibinfo{journal}{\emph{Proceedings of the VLDB Endowment}}
  \bibinfo{volume}{15}, \bibinfo{number}{11} (\bibinfo{date}{July}
  \bibinfo{year}{2022}), \bibinfo{pages}{2361--2374}.
\newblock
\showISSN{2150-8097}
\urldef\tempurl%
\url{https://doi.org/10.14778/3551793.3551799}
\showDOI{\tempurl}


\bibitem[Hilprecht et~al\mbox{.}(2020)]%
        {hilprecht_deepdb_2020}
\bibfield{author}{\bibinfo{person}{Benjamin Hilprecht},
  \bibinfo{person}{Andreas Schmidt}, \bibinfo{person}{Moritz Kulessa},
  \bibinfo{person}{Alejandro Molina}, \bibinfo{person}{Kristian Kersting},
  {and} \bibinfo{person}{Carsten Binnig}.} \bibinfo{year}{2020}\natexlab{}.
\newblock \showarticletitle{{DeepDB}: learn from data, not from queries!}
\newblock \bibinfo{journal}{\emph{Proceedings of the VLDB Endowment}}
  \bibinfo{volume}{13}, \bibinfo{number}{7} (\bibinfo{date}{March}
  \bibinfo{year}{2020}), \bibinfo{pages}{992--1005}.
\newblock
\showISSN{2150-8097}
\urldef\tempurl%
\url{https://doi.org/10.14778/3384345.3384349}
\showDOI{\tempurl}


\bibitem[Inc.(2023)]%
        {cloudflare_obj}
\bibfield{author}{\bibinfo{person}{Cloudflare Inc.}}
  \bibinfo{year}{2023}\natexlab{}.
\newblock \bibinfo{booktitle}{\emph{Cloudflare R2}}.
\newblock
\urldef\tempurl%
\url{{https://www.cloudflare.com/products/r2/}}
\showURL{%
\tempurl}


\bibitem[Ioannidis et~al\mbox{.}(1997)]%
        {ioannidis_parametric_1997}
\bibfield{author}{\bibinfo{person}{Yannis~E. Ioannidis},
  \bibinfo{person}{Raymond~T. Ng}, \bibinfo{person}{Kyuseok Shim}, {and}
  \bibinfo{person}{Timos~K. Sellis}.} \bibinfo{year}{1997}\natexlab{}.
\newblock \showarticletitle{Parametric query optimization}.
\newblock \bibinfo{journal}{\emph{The VLDB Journal}} \bibinfo{volume}{6},
  \bibinfo{number}{2} (\bibinfo{date}{May} \bibinfo{year}{1997}),
  \bibinfo{pages}{132--151}.
\newblock
\showISSN{0949-877X}
\urldef\tempurl%
\url{https://doi.org/10.1007/s007780050037}
\showDOI{\tempurl}
\newblock
\shownote{38 citations (Crossref) [2023-07-17]}.


\bibitem[Jain et~al\mbox{.}(2023)]%
        {jain_skyplane_2023}
\bibfield{author}{\bibinfo{person}{Paras Jain}, \bibinfo{person}{Sam Kumar},
  \bibinfo{person}{Sarah Wooders}, \bibinfo{person}{Shishir~G. Patil},
  \bibinfo{person}{Joseph~E. Gonzalez}, {and} \bibinfo{person}{Ion Stoica}.}
  \bibinfo{year}{2023}\natexlab{}.
\newblock \showarticletitle{Skyplane: {Optimizing} {Transfer} {Cost} and
  {Throughput} {Using} {Cloud}-{Aware} {Overlays}}.
  \bibinfo{pages}{1375--1389}.
\newblock
\showISBNx{978-1-939133-33-5}
\urldef\tempurl%
\url{https://www.usenix.org/conference/nsdi23/presentation/jain}
\showURL{%
\tempurl}


\bibitem[Jindal et~al\mbox{.}(2018)]%
        {cloudviews}
\bibfield{author}{\bibinfo{person}{Alekh Jindal}, \bibinfo{person}{Konstantinos
  Karanasos}, \bibinfo{person}{Sriram Rao}, {and} \bibinfo{person}{Hiren
  Patel}.} \bibinfo{year}{2018}\natexlab{}.
\newblock \showarticletitle{Selecting subexpressions to materialize at
  datacenter scale}.
\newblock \bibinfo{journal}{\emph{Proceedings of the VLDB Endowment}}
  \bibinfo{volume}{11}, \bibinfo{number}{7} (\bibinfo{year}{2018}),
  \bibinfo{pages}{800--812}.
\newblock


\bibitem[Karp et~al\mbox{.}(1972)]%
        {karp1972reducibility}
\bibfield{author}{\bibinfo{person}{Richard~M Karp}, \bibinfo{person}{RE
  Miller}, {and} \bibinfo{person}{JW Thatcher}.}
  \bibinfo{year}{1972}\natexlab{}.
\newblock \showarticletitle{Reducibility among combinatorial problems,
  Complexity of computer computations}.
\newblock \bibinfo{journal}{\emph{Proc. Sympos., IBM Thomas J. Watson Res.
  Center, Yorktown Heights, NY, 1972}} \bibinfo{volume}{378476},
  \bibinfo{number}{51} (\bibinfo{year}{1972}), \bibinfo{pages}{14644}.
\newblock


\bibitem[Kraska et~al\mbox{.}(2021)]%
        {sageDB}
\bibfield{author}{\bibinfo{person}{Tim Kraska}, \bibinfo{person}{Mohammad
  Alizadeh}, \bibinfo{person}{Alex Beutel}, \bibinfo{person}{Ed~H Chi},
  \bibinfo{person}{Jialin Ding}, \bibinfo{person}{Ani Kristo},
  \bibinfo{person}{Guillaume Leclerc}, \bibinfo{person}{Samuel Madden},
  \bibinfo{person}{Hongzi Mao}, {and} \bibinfo{person}{Vikram Nathan}.}
  \bibinfo{year}{2021}\natexlab{}.
\newblock \showarticletitle{Sagedb: A learned database system}.
\newblock  (\bibinfo{year}{2021}).
\newblock


\bibitem[Kroese et~al\mbox{.}(2014)]%
        {kroese_why_2014}
\bibfield{author}{\bibinfo{person}{Dirk~P. Kroese}, \bibinfo{person}{Tim
  Brereton}, \bibinfo{person}{Thomas Taimre}, {and} \bibinfo{person}{Zdravko~I.
  Botev}.} \bibinfo{year}{2014}\natexlab{}.
\newblock \showarticletitle{Why the {Monte} {Carlo} method is so important
  today}.
\newblock \bibinfo{journal}{\emph{WIREs Computational Statistics}}
  \bibinfo{volume}{6}, \bibinfo{number}{6} (\bibinfo{year}{2014}),
  \bibinfo{pages}{386--392}.
\newblock
\showISSN{1939-0068}
\urldef\tempurl%
\url{https://doi.org/10.1002/wics.1314}
\showDOI{\tempurl}
\newblock
\shownote{\_eprint: https://onlinelibrary.wiley.com/doi/pdf/10.1002/wics.1314}.


\bibitem[Lei and Rundensteiner(2014)]%
        {lei_robust_2014}
\bibfield{author}{\bibinfo{person}{Chuan Lei} {and} \bibinfo{person}{Elke~A.
  Rundensteiner}.} \bibinfo{year}{2014}\natexlab{}.
\newblock \showarticletitle{Robust {Distributed} {Query} {Processing} for
  {Streaming} {Data}}.
\newblock \bibinfo{journal}{\emph{ACM Transactions on Database Systems}}
  \bibinfo{volume}{39}, \bibinfo{number}{2} (\bibinfo{date}{May}
  \bibinfo{year}{2014}), \bibinfo{pages}{17:1--17:45}.
\newblock
\showISSN{0362-5915}
\urldef\tempurl%
\url{https://doi.org/10.1145/2602138}
\showDOI{\tempurl}


\bibitem[Leis et~al\mbox{.}(2018)]%
        {leis_query_2018}
\bibfield{author}{\bibinfo{person}{Viktor Leis}, \bibinfo{person}{Bernhard
  Radke}, \bibinfo{person}{Andrey Gubichev}, \bibinfo{person}{Atanas Mirchev},
  \bibinfo{person}{Peter Boncz}, \bibinfo{person}{Alfons Kemper}, {and}
  \bibinfo{person}{Thomas Neumann}.} \bibinfo{year}{2018}\natexlab{}.
\newblock \showarticletitle{Query optimization through the looking glass, and
  what we found running the {Join} {Order} {Benchmark}}.
\newblock \bibinfo{journal}{\emph{The VLDB Journal}} \bibinfo{volume}{27},
  \bibinfo{number}{5} (\bibinfo{date}{Oct.} \bibinfo{year}{2018}),
  \bibinfo{pages}{643--668}.
\newblock
\showISSN{0949-877X}
\urldef\tempurl%
\url{https://doi.org/10.1007/s00778-017-0480-7}
\showDOI{\tempurl}


\bibitem[Liang et~al\mbox{.}(2022)]%
        {private_ml}
\bibfield{author}{\bibinfo{person}{Weixin Liang}, \bibinfo{person}{Girmaw~Abebe
  Tadesse}, \bibinfo{person}{Daniel Ho}, \bibinfo{person}{L Fei-Fei},
  \bibinfo{person}{Matei Zaharia}, \bibinfo{person}{Ce Zhang}, {and}
  \bibinfo{person}{James Zou}.} \bibinfo{year}{2022}\natexlab{}.
\newblock \showarticletitle{Advances, challenges and opportunities in creating
  data for trustworthy AI}.
\newblock \bibinfo{journal}{\emph{Nature Machine Intelligence}}
  \bibinfo{volume}{4}, \bibinfo{number}{8} (\bibinfo{year}{2022}),
  \bibinfo{pages}{669--677}.
\newblock


\bibitem[Liu et~al\mbox{.}(2021)]%
        {liu_fauce_2021}
\bibfield{author}{\bibinfo{person}{Jie Liu}, \bibinfo{person}{Wenqian Dong},
  \bibinfo{person}{Qingqing Zhou}, {and} \bibinfo{person}{Dong Li}.}
  \bibinfo{year}{2021}\natexlab{}.
\newblock \showarticletitle{Fauce: fast and accurate deep ensembles with
  uncertainty for cardinality estimation}.
\newblock \bibinfo{journal}{\emph{Proceedings of the VLDB Endowment}}
  \bibinfo{volume}{14}, \bibinfo{number}{11} (\bibinfo{date}{July}
  \bibinfo{year}{2021}), \bibinfo{pages}{1950--1963}.
\newblock
\showISSN{2150-8097}
\urldef\tempurl%
\url{https://doi.org/10.14778/3476249.3476254}
\showDOI{\tempurl}


\bibitem[Majkowski(2015)]%
        {cloudflare_latency_3}
\bibfield{author}{\bibinfo{person}{Marek Majkowski}.}
  \bibinfo{year}{2015}\natexlab{}.
\newblock \bibinfo{booktitle}{\emph{The story of one latency spike}}.
\newblock
\urldef\tempurl%
\url{{https://blog.cloudflare.com/the-story-of-one-latency-spike/}}
\showURL{%
\tempurl}


\bibitem[Mansouri et~al\mbox{.}(2019)]%
        {mansouri_cost_2019}
\bibfield{author}{\bibinfo{person}{Yaser Mansouri},
  \bibinfo{person}{Adel~Nadjaran Toosi}, {and} \bibinfo{person}{Rajkumar
  Buyya}.} \bibinfo{year}{2019}\natexlab{}.
\newblock \showarticletitle{Cost {Optimization} for {Dynamic} {Replication} and
  {Migration} of {Data} in {Cloud} {Data} {Centers}}.
\newblock \bibinfo{journal}{\emph{IEEE Transactions on Cloud Computing}}
  \bibinfo{volume}{7}, \bibinfo{number}{3} (\bibinfo{year}{2019}),
  \bibinfo{pages}{705--718}.
\newblock
\urldef\tempurl%
\url{https://doi.org/10.1109/TCC.2017.2659728}
\showDOI{\tempurl}


\bibitem[Marcus et~al\mbox{.}(2021)]%
        {marcus_bao_2021}
\bibfield{author}{\bibinfo{person}{Ryan Marcus}, \bibinfo{person}{Parimarjan
  Negi}, \bibinfo{person}{Hongzi Mao}, \bibinfo{person}{Nesime Tatbul},
  \bibinfo{person}{Mohammad Alizadeh}, {and} \bibinfo{person}{Tim Kraska}.}
  \bibinfo{year}{2021}\natexlab{}.
\newblock \showarticletitle{Bao: {Making} {Learned} {Query} {Optimization}
  {Practical}}. In \bibinfo{booktitle}{\emph{Proceedings of the 2021
  {International} {Conference} on {Management} of {Data}}}
  \emph{(\bibinfo{series}{{SIGMOD} '21})}. \bibinfo{publisher}{Association for
  Computing Machinery}, \bibinfo{address}{New York, NY, USA},
  \bibinfo{pages}{1275--1288}.
\newblock
\showISBNx{978-1-4503-8343-1}
\urldef\tempurl%
\url{https://doi.org/10.1145/3448016.3452838}
\showDOI{\tempurl}


\bibitem[Megiddo(1987)]%
        {bewley_complexity_1987}
\bibfield{author}{\bibinfo{person}{Nimrod Megiddo}.}
  \bibinfo{year}{1987}\natexlab{}.
\newblock \showarticletitle{On the complexity of linear programming}.
\newblock In \bibinfo{booktitle}{\emph{Advances in {Economic} {Theory}}
  (\bibinfo{edition}{1} ed.)},
  \bibfield{editor}{\bibinfo{person}{Truman~Fassett Bewley}} (Ed.).
  \bibinfo{publisher}{Cambridge University Press}, \bibinfo{pages}{225--268}.
\newblock
\showISBNx{978-0-521-38925-9 978-0-521-34044-1 978-1-139-05205-4}
\urldef\tempurl%
\url{https://doi.org/10.1017/CCOL0521340446.006}
\showDOI{\tempurl}


\bibitem[Narayanan et~al\mbox{.}(2021)]%
        {pop}
\bibfield{author}{\bibinfo{person}{Deepak Narayanan}, \bibinfo{person}{Fiodar
  Kazhamiaka}, \bibinfo{person}{Firas Abuzaid}, \bibinfo{person}{Peter Kraft},
  \bibinfo{person}{Akshay Agrawal}, \bibinfo{person}{Srikanth Kandula},
  \bibinfo{person}{Stephen Boyd}, {and} \bibinfo{person}{Matei Zaharia}.}
  \bibinfo{year}{2021}\natexlab{}.
\newblock \showarticletitle{Solving large-scale granular resource allocation
  problems efficiently with pop}. In \bibinfo{booktitle}{\emph{Proceedings of
  the ACM SIGOPS 28th Symposium on Operating Systems Principles}}.
  \bibinfo{pages}{521--537}.
\newblock


\bibitem[Onwubolu and Babu(2004)]%
        {onwubolu_new_2004}
\bibfield{author}{\bibinfo{person}{Godfrey~C. Onwubolu} {and}
  \bibinfo{person}{B.~V. Babu}.} \bibinfo{year}{2004}\natexlab{}.
\newblock \bibinfo{booktitle}{\emph{New {Optimization} {Techniques} in
  {Engineering}}}. \bibinfo{series}{Studies in {Fuzziness} and {Soft}
  {Computing}}, Vol.~\bibinfo{volume}{141}.
\newblock \bibinfo{publisher}{Springer}, \bibinfo{address}{Berlin, Heidelberg}.
\newblock
\showISBNx{978-3-642-05767-0 978-3-540-39930-8}
\urldef\tempurl%
\url{https://doi.org/10.1007/978-3-540-39930-8}
\showDOI{\tempurl}


\bibitem[Power et~al\mbox{.}(2021)]%
        {cosmos_big_data}
\bibfield{author}{\bibinfo{person}{Conor Power}, \bibinfo{person}{Hiren Patel},
  \bibinfo{person}{Alekh Jindal}, \bibinfo{person}{Jyoti Leeka},
  \bibinfo{person}{Bob Jenkins}, \bibinfo{person}{Michael Rys},
  \bibinfo{person}{Ed Triou}, \bibinfo{person}{Dexin Zhu},
  \bibinfo{person}{Lucky Katahanas}, \bibinfo{person}{Chakrapani~Bhat
  Talapady}, {et~al\mbox{.}}} \bibinfo{year}{2021}\natexlab{}.
\newblock \showarticletitle{The cosmos big data platform at Microsoft: over a
  decade of progress and a decade to look forward}.
\newblock \bibinfo{journal}{\emph{Proceedings of the VLDB Endowment}}
  \bibinfo{volume}{14}, \bibinfo{number}{12} (\bibinfo{year}{2021}),
  \bibinfo{pages}{3148--3161}.
\newblock


\bibitem[Rawlings et~al\mbox{.}(2017)]%
        {rawlings_model_2017}
\bibfield{author}{\bibinfo{person}{James~Blake Rawlings},
  \bibinfo{person}{David~Q. Mayne}, {and} \bibinfo{person}{Moritz Diehl}.}
  \bibinfo{year}{2017}\natexlab{}.
\newblock \bibinfo{booktitle}{\emph{Model predictive control: theory,
  computation, and design} (\bibinfo{edition}{2nd edition} ed.)}.
\newblock \bibinfo{publisher}{Nob Hill Publishing}, \bibinfo{address}{Madison,
  Wisconsin}.
\newblock
\showISBNx{978-0-9759377-3-0}


\bibitem[Services(2023)]%
        {aws_obj}
\bibfield{author}{\bibinfo{person}{Amazon~Web Services}.}
  \bibinfo{year}{2023}\natexlab{}.
\newblock \bibinfo{booktitle}{\emph{Cloud Object Storage}}.
\newblock
\urldef\tempurl%
\url{{https://aws.amazon.com/s3/}}
\showURL{%
\tempurl}


\bibitem[Shankar et~al\mbox{.}(2022)]%
        {shreya_rolando}
\bibfield{author}{\bibinfo{person}{Shreya Shankar}, \bibinfo{person}{Rolando
  Garcia}, \bibinfo{person}{Joseph~M Hellerstein}, {and}
  \bibinfo{person}{Aditya~G Parameswaran}.} \bibinfo{year}{2022}\natexlab{}.
\newblock \showarticletitle{Operationalizing machine learning: An interview
  study}.
\newblock \bibinfo{journal}{\emph{arXiv preprint arXiv:2209.09125}}
  (\bibinfo{year}{2022}).
\newblock


\bibitem[Storn and Price(1997)]%
        {storn_differential_1997}
\bibfield{author}{\bibinfo{person}{Rainer Storn} {and} \bibinfo{person}{Kenneth
  Price}.} \bibinfo{year}{1997}\natexlab{}.
\newblock \showarticletitle{Differential {Evolution} – {A} {Simple} and
  {Efficient} {Heuristic} for global {Optimization} over {Continuous}
  {Spaces}}.
\newblock \bibinfo{journal}{\emph{Journal of Global Optimization}}
  \bibinfo{volume}{11}, \bibinfo{number}{4} (\bibinfo{date}{Dec.}
  \bibinfo{year}{1997}), \bibinfo{pages}{341--359}.
\newblock
\showISSN{1573-2916}
\urldef\tempurl%
\url{https://doi.org/10.1023/A:1008202821328}
\showDOI{\tempurl}


\bibitem[Taft et~al\mbox{.}(2018)]%
        {taft_p-store_2018}
\bibfield{author}{\bibinfo{person}{Rebecca Taft}, \bibinfo{person}{Nosayba
  El-Sayed}, \bibinfo{person}{Marco Serafini}, \bibinfo{person}{Yu Lu},
  \bibinfo{person}{Ashraf Aboulnaga}, \bibinfo{person}{Michael Stonebraker},
  \bibinfo{person}{Ricardo Mayerhofer}, {and} \bibinfo{person}{Francisco~Jose
  Andrade}.} \bibinfo{year}{2018}\natexlab{}.
\newblock \showarticletitle{P-{Store}: {An} {Elastic} {Database} {System} with
  {Predictive} {Provisioning}}.
\newblock \bibinfo{journal}{\emph{Proceedings of the 2018 International
  Conference on Management of Data}} (\bibinfo{year}{2018}).
\newblock


\bibitem[Trummer and Koch(2017)]%
        {trummer_multi-objective_2017}
\bibfield{author}{\bibinfo{person}{Immanuel Trummer} {and}
  \bibinfo{person}{Christoph Koch}.} \bibinfo{year}{2017}\natexlab{}.
\newblock \showarticletitle{Multi-objective parametric query optimization}.
\newblock \bibinfo{journal}{\emph{Commun. ACM}} \bibinfo{volume}{60},
  \bibinfo{number}{10} (\bibinfo{date}{Sept.} \bibinfo{year}{2017}),
  \bibinfo{pages}{81--89}.
\newblock
\showISSN{0001-0782}
\urldef\tempurl%
\url{https://doi.org/10.1145/3068612}
\showDOI{\tempurl}
\newblock
\shownote{6 citations (Semantic Scholar/DOI) [2023-07-17] 5 citations
  (Crossref) [2023-07-17]}.


\bibitem[Vaidya et~al\mbox{.}(2021)]%
        {vaidya_leveraging_2021}
\bibfield{author}{\bibinfo{person}{Kapil Vaidya}, \bibinfo{person}{Anshuman
  Dutt}, \bibinfo{person}{Vivek Narasayya}, {and} \bibinfo{person}{Surajit
  Chaudhuri}.} \bibinfo{year}{2021}\natexlab{}.
\newblock \showarticletitle{Leveraging query logs and machine learning for
  parametric query optimization}.
\newblock \bibinfo{journal}{\emph{Proceedings of the VLDB Endowment}}
  \bibinfo{volume}{15}, \bibinfo{number}{3} (\bibinfo{date}{Nov.}
  \bibinfo{year}{2021}), \bibinfo{pages}{401--413}.
\newblock
\showISSN{2150-8097}
\urldef\tempurl%
\url{https://doi.org/10.14778/3494124.3494126}
\showDOI{\tempurl}
\newblock
\shownote{2 citations (Crossref) [2023-07-17]}.


\bibitem[Wang et~al\mbox{.}(2021)]%
        {ready_for_cardinality}
\bibfield{author}{\bibinfo{person}{Xiaoying Wang}, \bibinfo{person}{Changbo
  Qu}, \bibinfo{person}{Weiyuan Wu}, \bibinfo{person}{Jiannan Wang}, {and}
  \bibinfo{person}{Qingqing Zhou}.} \bibinfo{year}{2021}\natexlab{}.
\newblock \showarticletitle{Are we ready for learned cardinality estimation?}
\newblock \bibinfo{journal}{\emph{Proceedings of the VLDB Endowment}}
  \bibinfo{volume}{14}, \bibinfo{number}{9} (\bibinfo{year}{2021}),
  \bibinfo{pages}{1640--1654}.
\newblock


\bibitem[Westenbroek et~al\mbox{.}(2021)]%
        {westenbroek_stability_2021}
\bibfield{author}{\bibinfo{person}{Tyler Westenbroek}, \bibinfo{person}{Max
  Simchowitz}, \bibinfo{person}{Michael~I. Jordan}, {and} \bibinfo{person}{S.
  Shankar~Sastry}.} \bibinfo{year}{2021}\natexlab{}.
\newblock \showarticletitle{On the {Stability} of {Nonlinear} {Receding}
  {Horizon} {Control}: {A} {Geometric} {Perspective}}. In
  \bibinfo{booktitle}{\emph{2021 60th {IEEE} {Conference} on {Decision} and
  {Control} ({CDC})}}. \bibinfo{publisher}{IEEE}, \bibinfo{address}{Austin, TX,
  USA}, \bibinfo{pages}{742--749}.
\newblock
\showISBNx{978-1-66543-659-5}
\urldef\tempurl%
\url{https://doi.org/10.1109/CDC45484.2021.9682955}
\showDOI{\tempurl}


\bibitem[Yang et~al\mbox{.}(2023)]%
        {yang_skypilot_2023}
\bibfield{author}{\bibinfo{person}{Zongheng Yang}, \bibinfo{person}{Zhanghao
  Wu}, \bibinfo{person}{Michael Luo}, \bibinfo{person}{Wei-Lin Chiang},
  \bibinfo{person}{Romil Bhardwaj}, \bibinfo{person}{Woosuk Kwon},
  \bibinfo{person}{Siyuan Zhuang}, \bibinfo{person}{Frank~Sifei Luan},
  \bibinfo{person}{Gautam Mittal}, \bibinfo{person}{Scott Shenker}, {and}
  \bibinfo{person}{Ion Stoica}.} \bibinfo{year}{2023}\natexlab{}.
\newblock \showarticletitle{{SkyPilot}: {An} {Intercloud} {Broker} for {Sky}
  {Computing}}. \bibinfo{pages}{437--455}.
\newblock
\showISBNx{978-1-939133-33-5}
\urldef\tempurl%
\url{https://www.usenix.org/conference/nsdi23/presentation/yang-zongheng}
\showURL{%
\tempurl}


\bibitem[Yi et~al\mbox{.}(2023)]%
        {yi_automated_2023}
\bibfield{author}{\bibinfo{person}{Wenjie Yi}, \bibinfo{person}{Rong Qu}, {and}
  \bibinfo{person}{Licheng Jiao}.} \bibinfo{year}{2023}\natexlab{}.
\newblock \showarticletitle{Automated algorithm design using proximal policy
  optimisation with identified features}.
\newblock \bibinfo{journal}{\emph{Expert Systems with Applications}}
  \bibinfo{volume}{216} (\bibinfo{date}{April} \bibinfo{year}{2023}),
  \bibinfo{pages}{119461}.
\newblock
\showISSN{0957-4174}
\urldef\tempurl%
\url{https://doi.org/10.1016/j.eswa.2022.119461}
\showDOI{\tempurl}


\bibitem[Zhang et~al\mbox{.}(2022)]%
        {onlinetune}
\bibfield{author}{\bibinfo{person}{Xinyi Zhang}, \bibinfo{person}{Hong Wu},
  \bibinfo{person}{Yang Li}, \bibinfo{person}{Jian Tan},
  \bibinfo{person}{Feifei Li}, {and} \bibinfo{person}{Bin Cui}.}
  \bibinfo{year}{2022}\natexlab{}.
\newblock \showarticletitle{Towards dynamic and safe configuration tuning for
  cloud databases}. In \bibinfo{booktitle}{\emph{Proceedings of the 2022
  International Conference on Management of Data}}. \bibinfo{pages}{631--645}.
\newblock


\bibitem[Zhu et~al\mbox{.}(2021a)]%
        {phoebe}
\bibfield{author}{\bibinfo{person}{Yiwen Zhu}, \bibinfo{person}{Matteo
  Interlandi}, \bibinfo{person}{Abhishek Roy}, \bibinfo{person}{Krishnadhan
  Das}, \bibinfo{person}{Hiren Patel}, \bibinfo{person}{Malay Bag},
  \bibinfo{person}{Hitesh Sharma}, {and} \bibinfo{person}{Alekh Jindal}.}
  \bibinfo{year}{2021}\natexlab{a}.
\newblock \showarticletitle{Phoebe: a learning-based checkpoint optimizer}.
\newblock \bibinfo{journal}{\emph{arXiv preprint arXiv:2110.02313}}
  (\bibinfo{year}{2021}).
\newblock


\bibitem[Zhu et~al\mbox{.}(2021b)]%
        {kea}
\bibfield{author}{\bibinfo{person}{Yiwen Zhu}, \bibinfo{person}{Subru
  Krishnan}, \bibinfo{person}{Konstantinos Karanasos}, \bibinfo{person}{Isha
  Tarte}, \bibinfo{person}{Conor Power}, \bibinfo{person}{Abhishek Modi},
  \bibinfo{person}{Manoj Kumar}, \bibinfo{person}{Deli Zhang},
  \bibinfo{person}{Kartheek Muthyala}, \bibinfo{person}{Nick Jurgens},
  {et~al\mbox{.}}} \bibinfo{year}{2021}\natexlab{b}.
\newblock \showarticletitle{Kea: Tuning an exabyte-scale data infrastructure}.
  In \bibinfo{booktitle}{\emph{Proceedings of the 2021 International Conference
  on Management of Data}}. \bibinfo{pages}{2667--2680}.
\newblock


\bibitem[Zhu et~al\mbox{.}(2023)]%
        {autonomous_azure}
\bibfield{author}{\bibinfo{person}{Yiwen Zhu}, \bibinfo{person}{Yuanyuan Tian},
  \bibinfo{person}{Joyce Cahoon}, \bibinfo{person}{Subru Krishnan},
  \bibinfo{person}{Ankita Agarwal}, \bibinfo{person}{Rana Alotaibi},
  \bibinfo{person}{Jes{\'u}s Camacho-Rodriguez}, \bibinfo{person}{Bibin
  Chundatt}, \bibinfo{person}{Andrew Chung}, \bibinfo{person}{Niharika Dutta},
  {et~al\mbox{.}}} \bibinfo{year}{2023}\natexlab{}.
\newblock \showarticletitle{Towards Building Autonomous Data Services on
  Azure}. In \bibinfo{booktitle}{\emph{Companion of the 2023 International
  Conference on Management of Data}}. \bibinfo{pages}{217--224}.
\newblock


\end{thebibliography}

\appendix
\begin{appendices}

\section{Undirected drift: Lagrangian Relaxation} \label{app:derivation}

\setcounter{equation}{0}
\renewcommand\theequation{A.\arabic{equation}}

We present a derivation to compute the closest intersection point from a given parameter along the current optimal plane, as utilized in Section~\ref{sec:use-cases-drift}. \Cref{fig:planes} illustrates two hyperplanes in \(\mathbb{R}^n\), denoted as \(P_0\) and \(P_i\), each defined by their respective unit normal vectors \(\vect{n}_0\) and \(\vect{n}_i\). Consider a point \(\vect{x}_0\) on \(P_0\) and a unit vector \(\vect{v}_0\) originating from \(\vect{x}_0\) and lying tangentially to \(P_0\). Importantly, \(\vect{v}_0\) is perpendicular to \(\vect{n}_0\). Our goal is to determine the intersection point \(\vect{x}_i\) on \(P_i\), where a ray in the direction of \(\vect{v}_0\) intersects \(P_i\).

The intersection point \(\vect{x}_i\) can be expressed as
\begin{equation}
    \vect{x}_i = \vect{x}_0 + t \vect{v}_0, \label{eq:xi}
\end{equation}
where \(t\) is a scalar. To find \(t\), we consider an arbitrary point \(\vect{o}\) on \(P_i\) and the orthogonality of \(\vect{n}_i\) with \(P_i\), giving \((\vect{x}_i - \vect{o}) \cdot \vect{n}_i = 0\). Assuming \(P_i\) passes through the origin, we can simplify the formulation by setting \(\vect{o}\) to the origin, leading to \(\vect{x}_i \cdot \vect{n}_i = 0\). Taking the dot product of equation \eqref{eq:xi} with \(\vect{n}_i\), and solving for \(t\) results in
\begin{equation}
    t = - \frac{\vect{x}_0 \cdot \vect{n}_i}{\vect{v}_0 \cdot \vect{n}_i}. \label{eq:t}
\end{equation}

Our interest lies in identifying an optimal direction \(\vect{v}_0\) that minimizes the distance between \(\vect{x}_i\) and \(\vect{x}_0\).

\begin{figure}[t!]
    \centering
    \begin{asy}[width=0.6\linewidth]
        include "figures/deviationOfOptimalPoints.asy";
    \end{asy}
    \vspace{-1.5em}
    \caption{Schematic representation of the ray shooting method from point \(\vect{x}_0\) in direction \(\vect{v}_0\) on hyperplane \(P_0\) to locate point \(\vect{x}_i\) on hyperplane \(P_i\).}
    \label{fig:planes}
    \vspace{-1em}
\end{figure}

\begin{proposition}
    Suppose \(\vect{x}_0 \in P_0\) is fixed and the unit vector \(\vect{v}_0\) can vary while passing through \(\vect{x}_0\) and remaining tangent to \(P_0\). Under the condition that \(P_0\) and \(P_i\) are non-parallel, the distance \(\Vert \vect{x}_i - \vect{x}_0 \Vert_2\) is minimized if
    \begin{equation}
        \vect{v}_0 = \pm \frac{\alpha \vect{n}_0 - \vect{n}_i}{\sqrt{1 - \alpha^2}}, \label{eq:v0}
    \end{equation}
    where \(\alpha \coloneqq \vect{n}_i \cdot \vect{n}_0\).
\end{proposition}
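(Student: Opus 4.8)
The plan is to reduce the constrained distance-minimization to a one-dimensional ratio and then optimize the only free quantity. First I would exploit that $\vect{v}_0$ is a unit vector: since $\vect{x}_i - \vect{x}_0 = t\vect{v}_0$ by \eqref{eq:xi}, we get $\Vert \vect{x}_i - \vect{x}_0 \Vert_2 = |t|$. Substituting the value of $t$ from \eqref{eq:t} yields
\[
    \Vert \vect{x}_i - \vect{x}_0 \Vert_2 = \frac{|\vect{x}_0 \cdot \vect{n}_i|}{|\vect{v}_0 \cdot \vect{n}_i|}.
\]
The numerator does not depend on $\vect{v}_0$, so minimizing the distance is equivalent to \emph{maximizing} $|\vect{v}_0 \cdot \vect{n}_i|$ subject to the tangency constraint $\vect{v}_0 \cdot \vect{n}_0 = 0$ and the normalization $\Vert \vect{v}_0 \Vert_2 = 1$.

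Second, I would solve this constrained maximization geometrically. Let $W$ be the hyperplane through the origin orthogonal to $\vect{n}_0$; every admissible $\vect{v}_0$ is a unit vector in $W$. For any $\vect{v}_0 \in W$ we have $\vect{v}_0 \cdot \vect{n}_i = \vect{v}_0 \cdot \vect{p}$, where $\vect{p} \coloneqq \vect{n}_i - \alpha \vect{n}_0$ is the orthogonal projection of $\vect{n}_i$ onto $W$ and $\alpha \coloneqq \vect{n}_i \cdot \vect{n}_0$. By Cauchy--Schwarz, $|\vect{v}_0 \cdot \vect{p}|$ is maximized over unit $\vect{v}_0 \in W$ exactly when $\vect{v}_0 = \pm \vect{p}/\Vert \vect{p} \Vert_2$. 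A short computation gives $\Vert \vect{p} \Vert_2^2 = 1 - \alpha^2$, so normalizing and keeping both orientations produces $\vect{v}_0 = \pm(\vect{n}_i - \alpha\vect{n}_0)/\sqrt{1-\alpha^2} = \pm(\alpha\vect{n}_0 - \vect{n}_i)/\sqrt{1-\alpha^2}$, which is exactly \eqref{eq:v0}. As an alternative I could run a Lagrange-multiplier argument on $(\vect{v}_0 \cdot \vect{n}_i)^2$ against the two constraints; the stationarity equation forces $\vect{v}_0 \in \operatorname{span}\{\vect{n}_0, \vect{n}_i\}$, after which tangency and unit norm pin the coefficients to the same expression.

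Third, I would verify well-posedness from the non-parallel hypothesis: $P_0 \not\parallel P_i$ means $\vect{n}_0$ and $\vect{n}_i$ are non-collinear, so the Cauchy--Schwarz inequality is strict and $\alpha^2 < 1$. Hence $\sqrt{1-\alpha^2} > 0$, the projection $\vect{p}$ is nonzero, and the normalization is legitimate.

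The main obstacle I anticipate is careful handling of the absolute value and the sign of $t$: one must confirm that maximizing $|\vect{v}_0 \cdot \vect{n}_i|$ indeed corresponds to a genuine minimum of the distance rather than a spurious stationary point, and treat the $\pm$ orientation cleanly. The two orientations describe the same line through $\vect{x}_0$, so they give the same intersection $\vect{x}_i$ and the same distance; the geometric Cauchy--Schwarz route makes both the maximizer and this sign ambiguity transparent, which is why I would prefer it over grinding through the second-order conditions of the Lagrangian.
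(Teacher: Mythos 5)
Your proposal is correct, and it reaches \eqref{eq:v0} by a genuinely different route in the core optimization step. You and the paper agree on the reduction: $\Vert \vect{x}_i - \vect{x}_0 \Vert_2 = \vert t \vert$ with a $\vect{v}_0$-independent numerator in \eqref{eq:t}, so the problem becomes maximizing $\vert \vect{v}_0 \cdot \vect{n}_i \vert$ over unit vectors tangent to $P_0$. From there the paper runs Lagrange multipliers on the non-smooth objective $\vert \vect{v}_0 \cdot \vect{n}_i \vert$, introducing a sign variable $s \coloneqq \sgn(\vect{v}_0 \cdot \vect{n}_i)$, solving the stationarity system for $\mu = -s\alpha$ and $2\lambda = \pm\sqrt{1-\alpha^2}$, and absorbing $\pm s$ into $\pm 1$ at the end; it never checks second-order conditions, so strictly it only exhibits stationary points. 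Your projection argument is both shorter and stronger: writing $\vect{p} \coloneqq \vect{n}_i - \alpha\vect{n}_0$ for the projection of $\vect{n}_i$ onto the tangent hyperplane $W$, noting $\vect{v}_0 \cdot \vect{n}_i = \vect{v}_0 \cdot \vect{p}$ for $\vect{v}_0 \in W$, and invoking Cauchy--Schwarz gives a \emph{global} maximizer $\vect{v}_0 = \pm \vect{p}/\Vert \vect{p} \Vert_2$ in one inequality, with $\Vert \vect{p} \Vert_2^2 = 1 - 2\alpha^2 + \alpha^2 = 1 - \alpha^2$ recovering the normalization. This sidesteps the differentiability issue with the absolute value, makes the $\pm$ orientation ambiguity transparent (both signs parameterize the same ray line), and lets the non-parallel hypothesis do visible work: $\alpha^2 < 1$ is exactly $\vect{p} \neq \vect{0}$, which legitimizes the division. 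Your fallback Lagrangian sketch on $(\vect{v}_0 \cdot \vect{n}_i)^2$ is also a cleaner smooth surrogate than the paper's $\vert \cdot \vert$ objective. In short: same reduction, different key lemma (Cauchy--Schwarz on a projection versus first-order stationarity), and your version settles the global-optimality question the paper's proof leaves implicit.
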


\begin{proof}
    From equation \eqref{eq:xi} and the unit norm of \(\vect{v}_0\), we get \(\Vert \vect{x}_i - \vect{x}_0 \Vert_2 = \vert t \vert \). Minimizing \(\vert t \vert\) involves maximizing the absolute value of the denominator in equation \eqref{eq:t}, as the numerator remains constant. The optimization problem is thus defined as
    \begin{equation*}
        \max_{\vect{v}_0 \in \mathbb{R}^n} \; \vert \vect{v}_0 \cdot \vect{n}_i \vert,
    \end{equation*}
    subject to the constraints
    \begin{equation*}
        \vect{v}_0 \cdot \vect{v}_0 = 1 \quad \text{and} \quad \vect{v}_0 \cdot \vect{n}_0 = 0,
    \end{equation*}
    which ensure that \(\vect{v}_0\) remains a unit vector tangent to \(P_0\). Using the method of Lagrange multipliers, we define the Lagrangian function as
    \begin{equation*}
        \mathcal{L}(\vect{v}_0, \lambda, \mu) \coloneqq \vert \vect{v}_0 \cdot \vect{n}_i \vert + \lambda (\vect{v}_0 \cdot \vect{v}_0 - 1) + \mu \vect{v}_0 \cdot \vect{n}_0,
    \end{equation*}
    with \(\lambda\) and \(\mu\) as Lagrangian multipliers. The optimal solution is found at the stationary point of \(\mathcal{L}\), where its partial derivatives vanish. Namely,
    \begin{subequations}
    \begin{align}
        \frac{\partial \mathcal{L}}{\partial \vect{v}_0} &= s \vect{n}_i + 2 \lambda \vect{v}_0 + \mu \vect{n}_0 = \vect{0}, \label{eq:L1} \\
        \frac{\partial \mathcal{L}}{\partial \lambda} &= \vect{v}_0 \cdot \vect{v}_0 - 1 = 0, \label{eq:L2} \\
        \frac{\partial \mathcal{L}}{\partial \mu} &= \vect{v}_0 \cdot \vect{n}_0 = 0, \label{eq:L3}
    \end{align}
    \end{subequations}
    where \(s \coloneqq \sgn (\vect{v}_0 \cdot \vect{n}_i)\).

    To determine \(\mu\), we take the dot product of equation \eqref{eq:L1} with \(\vect{n}_0\). Incorporating the constraint from equation \eqref{eq:L3} and recognizing that \(\vect{n}_0 \cdot \vect{n}_0 = 1\), we deduce that \(\mu = - s\vect{n}_i \cdot \vect{n}_0 = -s\alpha\). Consequently, resolving equation \eqref{eq:L1} for \(\vect{v}_0\) leads us to \(\vect{v}_0 = s(\alpha \vect{n}_0 - \vect{n}_i) / (2 \lambda)\). The next step involves the determination of \(\lambda\).

    We proceed by taking the dot product of equation \eqref{eq:L1} with \(\vect{v}_0\) and subsequently applying equations \eqref{eq:L2} and \eqref{eq:L3}, which yields \(2 \lambda = -s\vect{n}_i \cdot \vect{v}_0\). On the other hand, we also take the dot product of equation \eqref{eq:L1} with \(s\vect{n}_i\), considering the fact that \(\vect{n}_i \cdot \vect{n}_i = 1\) and \(s^2 = 1\). This operation results in the equation \(1 + 2 \lambda (s \vect{n}_i \cdot \vect{v}_0) - \alpha^2 = 0\). By eliminating \(s \vect{n}_i \cdot \vect{v}_0\) from these two equations and solving for \(2\lambda\), we find that \(2 \lambda = \pm \sqrt{1 - \alpha^2}\). Hence, we derive \(\vect{v}_0 = \pm s (\alpha \vect{n}_0 - \vect{n}_i) / \sqrt{1 - \alpha^2}\). Given the arbitrary nature of the sign in this expression, we can equate \(\pm s\) to \(\pm 1\), thereby completing the proof.
\end{proof}

\begin{remark}
    When \(P_0\) and \(P_i\) are parallel, equation \eqref{eq:v0} becomes undefined, as \(\alpha = \pm 1\). In this case, \(t = 0\), leading to the trivial solution \(\vect{x}_i = \vect{x}_0\).
\end{remark}

\end{appendices}

\end{document}